\definecolor{darkred}{rgb}{0.8,0.1,0.1}
\definecolor{lightblue}{rgb}{0.1,0.1,0.8}
\newtheorem{proposition}{Proposition}
\newtheorem{lemma}{Lemma}
\newtheorem{theorem}{Theorem}
\definecolor{darkblue}{RGB}{0,76,156}
\definecolor{darkkblue}{RGB}{0,0,153}
\definecolor{blue2}{RGB}{102,178,255}
\def\endenv{\ifmmode\;\else{\unskip\nobreak\hfil
\penalty50\hskip1em\null\nobreak\hfil\;
\parfillskip=0pt\finalhyphendemerits=0\endgraf}\fi}
\newenvironment{remark}{\noindent \textbf{{Remark~}}}{}
\mathchardef\ordinarycolon\mathcode`\:
\def\vcentcolon{\mathrel{\mathop\ordinarycolon}}
\def\resetMathstrut@{%
  \setbox\z@\hbox{%
    \mathchardef\@tempa\mathcode`\[\relax
    \def\@tempb##1"##2##3{\the\textfont"##3\char"}%
    \expandafter\@tempb\meaning\@tempa \relax
  }%
  \ht\Mathstrutbox@\ht\z@ \dp\Mathstrutbox@\dp\z@}
\newcommand{\nc}{\newcommand}
\nc{\rnc}{\renewcommand}
\nc{\beg}{\begin{equation}}
\nc{\eeq}{{\end{equation}}}
\nc{\beqa}{\begin{eqnarray}}
\nc{\eeqa}{\end{eqnarray}}
\nc{\lbar}[1]{\overline{#1}}
\nc{\ketbra}[2]{|#1\rangle\!\langle#2|}
\nc{\avg}[1]{\langle#1\rangle}
\nc{\Rank}{\operatorname{Rank}}
\nc{\smfrac}[2]{\mbox{$\frac{#1}{#2}$}}
\nc{\tr}{\operatorname{Tr}}
\nc{\ox}{\otimes}
\nc{\dg}{\dagger}
\nc{\dn}{\downarrow}
\nc{\cA}{{\cal A}}
\nc{\cB}{{\cal B}}
\nc{\cC}{{\cal C}}
\nc{\cD}{{\cal D}}
\nc{\cE}{{\cal E}}
\nc{\cF}{{\cal F}}
\nc{\cG}{{\cal G}}
\nc{\cH}{{\cal H}}
\nc{\cI}{{\cal I}}
\nc{\cJ}{{\cal J}}
\nc{\cK}{{\cal K}}
\nc{\cL}{{\cal L}}
\nc{\cM}{{\cal M}}
\nc{\cN}{{\cal N}}
\nc{\cO}{{\cal O}}
\nc{\cP}{{\cal P}}
\nc{\cQ}{{\cal Q}}
\nc{\cR}{{\cal R}}
\nc{\cS}{{\cal S}}
\nc{\cT}{{\cal T}}
\nc{\cV}{{\cal V}}
\nc{\cU}{{\cal U}}
\nc{\cX}{{\cal X}}
\nc{\cY}{{\cal Y}}
\nc{\cZ}{{\cal Z}}
\nc{\cW}{{\cal W}}
\nc{\csupp}{{\operatorname{csupp}}}
\nc{\qsupp}{{\operatorname{qsupp}}}
\nc{\var}{{\operatorname{var}}}
\nc{\rar}{\rightarrow}
\nc{\lrar}{\longrightarrow}
\nc{\polylog}{{\operatorname{polylog}}}
\nc{\wt}{{\operatorname{wt}}}
\nc{\av}[1]{{\left\langle {#1} \right\rangle}}
\nc{\supp}{{\operatorname{supp}}}
\def\ve{\varepsilon}
\def\x{\xi}
\def\D{\Delta}
\def\T{\Theta}
\def\O{\Omega}
\nc{\RR}{{{\mathbb R}}}
\nc{\CC}{{{\mathbb C}}}
\nc{\FF}{{{\mathbb F}}}
\nc{\NN}{{{\mathbb N}}}
\nc{\ZZ}{{{\mathbb Z}}}
\nc{\PP}{{{\mathbb P}}}
\nc{\QQ}{{{\mathbb Q}}}
\nc{\UU}{{{\mathbb U}}}
\nc{\EE}{{{\mathbb E}}}
\nc{\CHSH}{{\operatorname{CHSH}}}
\nc{\be}{\begin{equation}}
\nc{\ee}{{\end{equation}}}
\nc{\bea}{\begin{eqnarray}}
\nc{\eea}{\end{eqnarray}}
\nc{\Hom}[2]{\mbox{Hom}(\CC^{#1},\CC^{#2})}
\nc{\rU}{\mbox{U}}
\nc{\ob}[1]{#1}
\nc{\SEP}{{\text{SEP}}}
\nc{\NS}{{\text{NS}}}
\nc{\LOCC}{{\text{LOCC}}}
\nc{\PPT}{{\text{PPT}}}
\nc{\EXT}{{\text{EXT}}}
\nc{\Sym}{{\operatorname{Sym}}}
\nc{\ERLO}{{E_{\text{r,LO}}}}
\nc{\ERLOCC}{{E_{\text{r,LOCC}}}}
\nc{\ERPPT}{{E_{\text{r,PPT}}}}
\nc{\ERLOCCinfty}{{E^{\infty}_{\text{r,LOCC}}}}
\nc{\Aram}{{\operatorname{\sf A}}}
\def\grd@save@target#1{%
  \def\grd@target{#1}}
\def\grd@save@start#1{%
  \def\grd@start{#1}}
\tikzset{
  grid with coordinates/.style={
    to path={%
      \pgfextra{%
        \edef\grd@@target{(\tikztotarget)}%
        \tikz@scan@one@point\grd@save@target\grd@@target\relax
        \edef\grd@@start{(\tikztostart)}%
        \tikz@scan@one@point\grd@save@start\grd@@start\relax
        \draw[minor help lines,magenta] (\tikztostart) grid (\tikztotarget);
        \draw[major help lines] (\tikztostart) grid (\tikztotarget);
        \grd@start
        \pgfmathsetmacro{\grd@xa}{\the\pgf@x/1cm}
        \pgfmathsetmacro{\grd@ya}{\the\pgf@y/1cm}
        \grd@target
        \pgfmathsetmacro{\grd@xb}{\the\pgf@x/1cm}
        \pgfmathsetmacro{\grd@yb}{\the\pgf@y/1cm}
        \pgfmathsetmacro{\grd@xc}{\grd@xa + \pgfkeysvalueof{/tikz/grid with coordinates/major step}}
        \pgfmathsetmacro{\grd@yc}{\grd@ya + \pgfkeysvalueof{/tikz/grid with coordinates/major step}}
        \foreach \x in {\grd@xa,\grd@xc,...,\grd@xb}
        \node[anchor=north] at (\x,\grd@ya) {\pgfmathprintnumber{\x}};
        \foreach \y in {\grd@ya,\grd@yc,...,\grd@yb}
        \node[anchor=east] at (\grd@xa,\y) {\pgfmathprintnumber{\y}};
      }
    }
  },
  minor help lines/.style={
    help lines,
    step=\pgfkeysvalueof{/tikz/grid with coordinates/minor step}
  },
  major help lines/.style={
    help lines,
    line width=\pgfkeysvalueof{/tikz/grid with coordinates/major line width},
    step=\pgfkeysvalueof{/tikz/grid with coordinates/major step}
  },
  grid with coordinates/.cd,
  minor step/.initial=.2,
  major step/.initial=1,
  major line width/.initial=2pt,
}
\tikzset{
  treenode/.style = {align=center, inner sep=0pt, text centered,
    font=\sffamily},
  arn_n/.style = {treenode, circle, white, font=\sffamily\bfseries, draw=black,
    fill=black, text width=1.5em},% arbre rouge noir, noeud noir
  arn_r/.style = {treenode, circle, red, draw=red, 
    text width=1.5em, very thick},% arbre rouge noir, noeud rouge
  arn_x/.style = {treenode, rectangle, draw=black,
    minimum width=0.5em, minimum height=0.5em}% arbre rouge noir, nil
}
\DeclareMathOperator{\Tr}{Tr}
\DeclareMathOperator{\conv}{conv}
\newcommand{\norm}[2]{\left\lVert#1\right\rVert_{\,#2}}
\newcommand{\proj}[1]{\ket{#1}\!\bra{#1}}
\newcommand{\lnorm}[2]{\left\lVert#1\right\rVert_{\ell_{#2}}}
\renewcommand{\T}{\mathcal{T}}
\newcommand{\B}{\mathcal{B}}
\newcommand{\C}{\mathcal{C}}
\newcommand{\V}{\mathcal{V}}
\newcommand{\M}{\mathcal{M}}
\newcommand{\N}{\mathcal{N}}
\newcommand{\I}{\mathcal{I}}
\newcommand{\J}{\mathcal{J}}
\newcommand{\Q}{\mathcal{Q}}
\newcommand{\Qm}{{\mathcal{Q}_{m}}}
\let\T\relax
\newcommand{\T}[1]{T^{({#1})}_\I}
\newcommand{\TJ}[1]{T^{({#1})}_\J}
\newcommand{\mleq}{\leq}
\newcommand{\mgeq}{\geq}
\nc{\MIO}{{\text{\rm MIO}}}
\nc{\DIO}{{\text{\rm DIO}}}
\nc{\SIO}{{\text{\rm SIO}}}
\nc{\IO}{{\text{\rm IO}}}
\renewcommand{\*}{\textup{*}}
\newcommand{\<}{\left\langle}
\renewcommand{\>}{\right\rangle}
\renewcommand{\bar}{\;\rule{0pt}{9.5pt}\right|\;}
\newcommand{\lset}{\left\{\left.}
\newcommand{\rset}{\right\}}
\DeclareMathOperator*{\argmin}{arg\,min}
\newcommand{\DD}{\mathbb{D}}
\newcommand{\cbraket}[1]{\left|\braket{#1}\right|}
\newcommand{\id}{\mathbbm{1}}
\renewcommand{\O}{\mathcal{O}}
\newcommand{\mnorm}[1]{\norm{#1}{[m]}}
\let\oldproofname\proofname
\renewcommand{\proofname}{\rm\bf{\oldproofname}}
\renewenvironment{proof}[1][\proofname]{%
  \vspace{-\topsep}
  \pushQED{\qed}
  \normalfont
  \topsep6\p@\@plus6\p@\relax
  \trivlist\item[\hskip\labelsep\bfseries#1\@addpunct{.}]\ignorespaces}{\popQED\endtrivlist\@endpefalse}
\newcommand{\notts}{\affiliation{School of Mathematical Sciences and Centre for the Mathematics and Theoretical Physics of Quantum Non-Equilibrium Systems, University of Nottingham, University Park, Nottingham NG7 2RD, United Kingdom}}
\begin{document}

\title{One-shot coherence distillation}

%%%%%%%%%%%%%%%%%%%%%%%%

\author{Bartosz Regula}
\email{bartosz.regula@gmail.com}
\notts

\author{Kun Fang}
\email{kun.fang-1@student.uts.edu.au}
\affiliation{Centre for Quantum Software and Information, School of Software, Faculty of Engineering and Information Technology, University of Technology Sydney, NSW 2007, Australia}

\author{Xin Wang}
\email{xin.wang-8@student.uts.edu.au}
\affiliation{Centre for Quantum Software and Information, School of Software, Faculty of Engineering and Information Technology, University of Technology Sydney, NSW 2007, Australia}

\author{Gerardo Adesso}
\email{gerardo.adesso@nottingham.ac.uk}
\notts

%%%%%%%%%%%%%%%%%%%%%%%%%

\begin{abstract}
We characterize the distillation of quantum coherence in the one-shot setting, that is, the conversion of general quantum states into maximally coherent states under different classes of quantum operations. We show that the maximally incoherent operations (MIO) and the dephasing-covariant incoherent operations (DIO) have the same power in the task of one-shot coherence distillation. We establish that the one-shot distillable coherence under MIO and DIO is efficiently computable with a semidefinite program, which we show to correspond to a quantum hypothesis testing problem. Further, we introduce a family of coherence monotones generalizing the robustness of coherence as well as the modified trace distance of coherence, and show that they admit an operational interpretation in characterizing the fidelity of distillation under different classes of operations. By providing an explicit formula for these quantities for pure states, we show that the one-shot distillable coherence under MIO, DIO, strictly incoherent operations (SIO), and incoherent operations (IO) is equal for all pure states.
\end{abstract}

%\date{\today}
\maketitle

\section{Introduction}

The phenomenon of {\it quantum coherence}, emerging from the fundamental property of quantum superposition, has found use in a variety of physical tasks in quantum cryptography, quantum information processing, thermodynamics, metrology, and even quantum biology \cite{streltsov_2017}. The recent years have seen the development of the resource-theoretic framework of quantum coherence, establishing precise physical and mathematical laws governing the creation, manipulation, and conversion of coherence \cite{aberg_2006,gour_2008,levi_2014,baumgratz_2014}.
The archetypal example of a resource theory is quantum entanglement, and although coherence and entanglement share a large number of similarities which allowed for many parallels and interrelations between the two resource theories to be established \cite{baumgratz_2014,streltsov_2015,sperling_2015,vogel_2014,winter_2016,killoran_2016,streltsov_2016,rana_2017,chitambar_2016-2,zhu_2018-1,zhu_2017-1,regula_2018-2}, the two are also very different in some aspects and can require different approaches. One particular difference is the lack of a single, physically-motivated choice of free operations which best describe the allowed state manipulations in the resource theory of quantum coherence, unlike the standard choice of local operations and classical communication (LOCC) for entanglement \cite{horodecki_2009}. It thus becomes necessary to characterize the operational properties and applications of quantum coherence under several different sets of such operations \cite{winter_2016,chitambar_2016,marvian_2016,vicente_2017,streltsov_2017}, the most common ones being incoherent operations (IO) \cite{baumgratz_2014}, strictly incoherent operations (SIO) \cite{winter_2016}, dephasing-covariant incoherent operations (DIO) \cite{chitambar_2016,marvian_2016}, and maximally incoherent operations (MIO) \cite{aberg_2006}.

One of the most significant aspects of a resource theory are the rules governing state transformations under the free operations. In particular, the problem of \emph{distillation} asks: given a canonical unit of coherence represented by the maximally coherent state $\ket\Psi$, what is the best rate at which we can convert copies of a state $\rho$ into copies of $\ket\Psi$ under a chosen set of free operations? The standard approach to this problem in quantum information theory, both in the resource theories of entanglement \cite{bennett_1996-1,bennett_1996-3,rains_1999} and coherence \cite{winter_2016,chitambar_2016-3}, is to consider the asymptotic limit --- that is, assume that we have access to an unbounded number of independent and identically distributed (i.i.d.) copies of a quantum system. In a realistic setting, however, the resources are finite and the number of i.i.d. prepared states is necessarily limited. More importantly, it is very difficult to perform coherent state manipulations over large numbers of systems. Therefore, it becomes crucial to be able to characterize how well we can distill maximally coherent states from a finite number of copies of the prepared states. The study of such non-asymptotic scenarios has garnered great interest in quantum information theory \cite{wang_2012,renes_2011,tomamichel_2013,berta_2011,leung_2015,datta_2013,wang_2017,fang_2017}, including work in the one-shot theory of entanglement distillation \cite{brandao_2011,buscemi_2010-1,buscemi_2013}. More recently, one-shot results in the resource theory of coherence \cite{zhao_2018,bu_2017} and more general quantum resource theories \cite{gour_2017,gour_2017-1} were obtained. 

In this Letter, we develop the framework for non-asymptotic coherence distillation, in which one has access only to a single copy of a quantum system and allows for a finite accuracy, reflecting the realistic restrictions on state transformations. In particular, we establish an exact expression for the one-shot distillable coherence under MIO and DIO, which can be efficiently computed as a semidefinite program (SDP). Interestingly, we show that the two quantities are in fact the same, demonstrating that MIO and DIO have the same power in the task of coherence distillation, and together with recent results in coherence dilution \cite{zhao_2018,chitambar_2017-1} shedding light on the asymptotic reversibility of state transformations under DIO. Further, we generalize two fundamental quantifiers of coherence, the robustness of coherence \cite{napoli_2016} and the modified trace distance of coherence \cite{yu_2016}, establishing a family of measures of coherence which we show to have an operational application in characterizing the maximal fidelity of distillation under different sets of operations. We derive exact expressions for these measures for all pure states, leading to a complete characterization of pure-state one-shot distillation of coherence and showing that all the considered sets of operations --- IO, SIO, DIO, MIO --- have exactly the same power in such a task. We discuss our methods and results below, and defer more technical derivations to the Supplemental Material \footnote{See the Supplemental Material below.}.

\section{A family of coherence monotones}
Consider a fixed orthonormal basis $\{\ket{i}\}$ in a $d$-dimensional Hilbert space ($d < \infty$). We will use $\DD$ to denote the set of all density matrices in this space, and for a pure state $\ket\psi$ we will write $\psi\coloneqq\proj\psi$. Let $\Delta$ denote the diagonal map (fully dephasing channel) in the basis $\{\ket{i}\}$. We will denote by $\I$ the set of density matrices which are diagonal in this basis, i.e. $\rho \in \DD$ such that $\rho = \Delta(\rho)$, and by $\I\*\*$ the cone of diagonal positive semidefinite matrices which are not necessarily normalized.

The resource theory of coherence consists of the following ingredients \cite{baumgratz_2014}: the set of free \emph{incoherent states}, represented by $\I$, and the free operations, that is, a set of quantum operations which do not generate coherence. The largest possible set of such free operations are the \emph{maximally incoherent operations (MIO)} \cite{aberg_2006}, which are given by quantum channels $\cE$ such that $\cE(\rho) \in \I$ for every $\rho \in \I$. The \emph{incoherent operations (IO)} \cite{baumgratz_2014} are those for which there exists a Kraus decomposition into incoherent Kraus operators, that is, $\{K_\ell\}$ such that
$ K_\ell\rho K_\ell^{\dagger}\in\I\*\*$ for all $\ell$ and all $\rho\in \I$. The \emph{strictly incoherent operations (SIO)} \cite{winter_2016} are operations for which both $\{K_\ell\}$ and $\{K^\dagger_\ell\}$ are sets of incoherent operators. Finally, the \emph{dephasing-covariant incoherent operations (DIO)} are maps $\cE$ such that $[\Delta, \cE] = 0$.
The following strict inclusions hold: $\MIO \supsetneq \IO \supsetneq \SIO$, $\MIO \supsetneq \DIO \supsetneq \SIO$ \cite{chitambar_2016-1,marvian_2016}.

Throughout the development of the resource theory of coherence, many different quantifiers of this resource have been defined \cite{baumgratz_2014,streltsov_2015,rana_2016,napoli_2016,yu_2016,streltsov_2017}. A particular example is the \textit{trace distance of coherence}, given by \cite{baumgratz_2014,rana_2016}
\begin{equation}\begin{aligned}
  T_\I(\rho) = \min \lset \norm{\rho - \sigma}{1} \bar \sigma \in \I \rset.
\end{aligned}\end{equation}
Although the trace distance is a fundamental quantity in quantum information theory \cite{nielsen_2011,wilde_2017}, the trace distance of coherence was found to violate the property of strong monotonicity under incoherent operations \cite{yu_2016}, which is considered as one of the requirements that a valid measure of coherence should satisfy \cite{baumgratz_2014,streltsov_2017}. Therefore, an alternative measure called the \textit{modified trace distance of coherence} satisfying strong monotonicity under IO was proposed \cite{yu_2016}:
\begin{equation}\begin{aligned}\label{eq:mod_trace}
  T'_\I (\rho) &= \min \lset \norm{\rho - \lambda \sigma}{1} \bar \sigma \in \I,\; \lambda \geq 0 \rset\\
  &= \min \lset \norm{\rho -  X}{1} \bar X \in \I\*\* \rset.\\
\end{aligned}\end{equation}
Noting that strong duality holds \cite{regula_2018,johnston_2017-1}, we can consider the Lagrange dual of the above expression to obtain a characterization of the modified trace distance of coherence as
\begin{equation}\begin{aligned}
   T'_\I (\rho) &= \max \lset \< \rho, W \> \bar -\mathbbm{1} \mleq W \mleq \mathbbm{1},\; \Delta(W) \mleq 0 \rset
\end{aligned}\end{equation}
with the Hilbert-Schmidt inner product $\<X,Y\> = \Tr(XY)$ for Hermitian operators. We then extend the above to a family of quantifiers given by SDPs of the form
\begin{equation}\begin{aligned}\label{eq:family_Tm}
 \!\!\! \T{m} (\rho) = \max \big\{ \< \rho, W \> \;\big|\;  &-\mathbbm{1} \mleq W \mleq m\mathbbm{1},\;%\\& 
  \Delta(W) \mleq 0 \,\big\},\!\!
\end{aligned}\end{equation}
similarly to the approach of Brand\~ao \cite{brandao_2005} for entanglement measures. Here, we will take $m \in \NN$, although it can be treated as a continuous parameter in general. The fact that each such measure is a valid coherence monotone can be shown by expressing $\T{m}$ as a convex gauge function \cite{regula_2018}, and we formalize it as follows.
\begin{proposition}\label{prop:Tm_monotonic}
For each $m\geq 1$, $\T{m}$ is a faithful and convex coherence measure satisfying strong monotonicity under MIO.
 \end{proposition}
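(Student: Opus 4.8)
The plan is to handle all three properties through the two equivalent representations of $\T{m}$: the maximization form in which it is defined, and a minimization (primal) form obtained by strong duality. Convexity and positive homogeneity are immediate from the defining expression: for each fixed feasible $W$ the map $\rho \mapsto \< \rho, W \>$ is linear, and $\T{m}$ is their pointwise maximum, hence convex, and also positively homogeneous, $\T{m}(t\rho) = t\,\T{m}(\rho)$ for $t \geq 0$, once extended to all positive semidefinite operators. Invoking strong duality exactly as for the $m=1$ case, I would first rewrite
\begin{equation}\begin{aligned}
\T{m}(\rho) = \min\big\{ \Tr A + m\Tr B \mid \rho = D + B - A,\ A,B \geq 0,\ D \in \I\*\* \big\},
\end{aligned}\end{equation}
the Lagrange dual of the defining SDP, where $A,B$ are the multipliers of $W \geq -\mathbbm{1}$ and $W \leq m\mathbbm{1}$, and $D$ is the diagonal multiplier enforcing $\Delta(W)\leq 0$. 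Checking that $m=1$ returns $T'_\I$ fixes the normalization.

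Faithfulness then reads off this primal form. Non-negativity is clear since the objective is non-negative, while the choice $D=\rho$, $A=B=0$ gives $\T{m}(\rho)=0$ for every $\rho \in \I$. Conversely $\T{m}(\rho)=0$ forces $\Tr A = \Tr B = 0$, hence $A=B=0$ by positivity, so $\rho = D \in \I\*\*$; since $\Tr\rho = 1$ this means $\rho \in \I$. Thus $\T{m}(\rho)=0$ if and only if $\rho$ is incoherent.

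The substance of the proposition is strong monotonicity, which I would prove directly from the primal form. I model a selective MIO protocol as an instrument $\{\Lambda_\ell\}$ of completely positive maps with $\sum_\ell \Lambda_\ell$ trace-preserving and each branch incoherence-preserving, $\Lambda_\ell(\I\*\*) \subseteq \I\*\*$; writing $p_\ell = \Tr\Lambda_\ell(\rho)$ and $\rho_\ell = \Lambda_\ell(\rho)/p_\ell$, the target is $\sum_\ell p_\ell \T{m}(\rho_\ell) \leq \T{m}(\rho)$. Taking an optimal primal triple $(A,B,D)$ for $\rho$ and applying a branch gives $\Lambda_\ell(\rho) = \Lambda_\ell(D) + \Lambda_\ell(B) - \Lambda_\ell(A)$, where $\Lambda_\ell(D) \in \I\*\*$ by incoherence preservation and $\Lambda_\ell(A), \Lambda_\ell(B) \geq 0$ by complete positivity. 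Hence this triple is feasible for $\Lambda_\ell(\rho)$, so the minimization together with positive homogeneity yields
\begin{equation}\begin{aligned}
p_\ell \T{m}(\rho_\ell) = \T{m}(\Lambda_\ell(\rho)) \leq \Tr\Lambda_\ell(A) + m\Tr\Lambda_\ell(B).
\end{aligned}\end{equation}
Summing over $\ell$ and using trace preservation of $\sum_\ell \Lambda_\ell$, the right-hand side collapses to $\Tr A + m\Tr B = \T{m}(\rho)$, which is exactly strong monotonicity.

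I expect the two inputs to this argument, rather than the argument itself, to be the only delicate points. The first is justifying the primal representation: I must confirm Slater's condition so that strong duality is tight, which holds because $W=0$ is strictly feasible for the norm-type constraints while $\Delta(W)\leq 0$ is affine. The second, and the real conceptual choice, is fixing the operational meaning of strong monotonicity under MIO: the clean statement requires that each measurement branch individually preserve the cone of unnormalised incoherent states, which is precisely the condition making the flagged channel $\rho \mapsto \sum_\ell \Lambda_\ell(\rho) \otimes \proj{\ell}$ an MIO map. Once this definition is in place, the proof is driven entirely by linearity of the decomposition, positivity of each branch, and positive homogeneity, so I anticipate no further obstruction.
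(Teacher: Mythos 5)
Your proof is correct and takes essentially the same route as the paper's: the minimization form $\T{m}(\rho)=\min\lset \Tr A + m\Tr B \bar \rho = D+B-A,\ A,B\geq 0,\ D\in\I\*\* \rset$ is exactly the gauge function $\Gamma_{\C_m}$ of $\conv\left((-\DD)\cup\frac1m\DD\cup\I\*\*\right)$ that the paper works with, and your branch-by-branch argument for strong monotonicity (pushing the optimal decomposition through each completely positive, incoherence-preserving map and using positive homogeneity) is precisely an explicit unpacking of the gauge-function monotonicity theorem the paper cites. The only difference is presentational: you prove the monotonicity inequality from scratch rather than by citation, which makes the argument self-contained but does not change its substance.
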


Note that $m=1$ gives $T'_\I(\rho)$. For $m=d-1$, one can notice that the constraint $W \mleq (d-1)\mathbbm{1}$ is redundant: the other constraints ensure that the smallest eigenvalue of $W$ is at least $-1$ and that the trace of $W$ is at most $0$, together implying that there cannot exist an eigenvalue of $W$ which is larger than $d-1$. Therefore, we get
\begin{equation}\begin{aligned}
  \T{d-1}(\rho) &= \max \lset \< \rho, W\> \bar\! -\mathbbm{1} \mleq W,\; \Delta(W) \mleq 0 \ \rset \\
  &= R_\I(\rho)
\end{aligned}\end{equation}
where $R_\I(\rho)$ is the robustness of coherence \cite{napoli_2016,piani_2016}. This shows that $\T{m}$ can be thought of as a family of measures interpolating between the modified trace distance and the robustness of coherence. Notice that we clearly also have $\T{m}(\rho) = R_\I(\rho)$ for any $m > d-1$. In general, we have that $0 \leq \T{m}(\rho) \leq m \;\forall \rho \in \DD$. It is straightforward to see by strong Lagrange duality that the family $\T{m}$ satisfies
\begin{equation}\begin{aligned}\label{eq:modified-trace-primal}
  \T{m}(\rho) = \min_{X \in \I\*\*} \;m \Tr\left(\rho-X\right)_+ + \Tr\left(\rho-X\right)_-\,,
\end{aligned}\end{equation}
where $(\rho-X)_\pm$ denote the positive and negative parts of the Hermitian operator $\rho-X$.

To characterize the values of the family of quantifiers $\T{m}$ on pure states, we consider the following quantity:
\begin{equation}\begin{aligned}\label{eq:norm_primal}
  \mnorm{\ket\psi} = \min_{\ket{\psi} = \ket{x} + \ket{y}} \lnorm{\ket{x}}{1} + \sqrt{m} \lnorm{\ket{y}}{2}\,,
\end{aligned}\end{equation}
which we (for reasons that will become clear later) call the $m$\textit{-distillation norm}. 
%One can immediately notice from the inequality $\lnorm{\cdot}{2} \leq \lnorm{\cdot}{1} \leq \sqrt{d} \lnorm{\cdot}{2}$ that for any pure state we have $\norm{\ket\psi}{[1]} = \lnorm{\ket\psi}{2}$ and $\norm{\ket\psi}{[d]} = \lnorm{\ket\psi}{1}$. Further, we can obtain the following result.
We can then obtain the following result.
\begin{theorem}\label{thm:maj_norm_trace_distance}
For any pure state $\ket\psi$ and any $m \geq 1$,
\begin{equation}\begin{aligned}
  \T{m-1}(\psi) = \mnorm{\ket\psi}^2 - 1.
\end{aligned}\end{equation}
\end{theorem}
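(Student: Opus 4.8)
The plan is to recast the semidefinite program defining $\T{m-1}$ into a form that manifestly matches the $m$-distillation norm, and then prove that the two resulting optimization problems coincide through a pair of matching inequalities. The first step is the affine substitution $W = V - \mathbbm{1}$, which removes the lower-bound constraint: the conditions $-\mathbbm{1} \leq W \leq (m-1)\mathbbm{1}$ and $\Delta(W) \leq 0$ become $0 \leq V \leq m\mathbbm{1}$ and $V_{ii} \leq 1$ for every $i$, while the objective turns into $\langle\psi|V|\psi\rangle - 1$ because $\Tr\psi = 1$. It therefore suffices to establish
\[
  \max\left\{\langle\psi|V|\psi\rangle : 0 \leq V \leq m\mathbbm{1},\ V_{ii} \leq 1\ \forall i \right\} = \mnorm{\ket\psi}^2 .
\]

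For the inequality ``$\leq$'', I would fix an arbitrary feasible $V$ and an arbitrary decomposition $\ket\psi = \ket x + \ket y$, write the objective as $\langle\psi|V|\psi\rangle = \left\| V^{1/2}\ket\psi \right\|_2^2$, and apply the triangle inequality $\left\| V^{1/2}\ket\psi \right\|_2 \leq \left\| V^{1/2}\ket x \right\|_2 + \left\| V^{1/2}\ket y \right\|_2$. The $\ket y$ term is controlled by $V \leq m\mathbbm{1}$, since $\left\| V^{1/2}\ket y \right\|_2^2 = \langle y|V|y\rangle \leq m\,\lnorm{\ket y}{2}^2$. The $\ket x$ term is controlled by positivity together with the normalized diagonal: because $V \geq 0$ forces $|V_{ij}| \leq \sqrt{V_{ii}V_{jj}} \leq 1$, expanding in the basis gives $\langle x|V|x\rangle = \sum_{i,j} \overline{x_i}\, x_j V_{ij} \leq (\sum_i |x_i|)^2 = \lnorm{\ket x}{1}^2$. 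Combining the two estimates yields $\sqrt{\langle\psi|V|\psi\rangle} \leq \lnorm{\ket x}{1} + \sqrt m\, \lnorm{\ket y}{2}$, and since the left-hand side is independent of the decomposition, minimizing the right-hand side gives exactly $\langle\psi|V|\psi\rangle \leq \mnorm{\ket\psi}^2$.

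For the reverse inequality I would exhibit a rank-one feasible $V$ that saturates it. Recognizing $\mnorm{\cdot}$ as the infimal convolution of $\lnorm{\cdot}{1}$ and $\sqrt m\,\lnorm{\cdot}{2}$, whose unit ball is the convex hull of the two constituent balls, identifies the dual norm as the pointwise maximum $\max(\lnorm{\cdot}{\infty},\ m^{-1/2}\lnorm{\cdot}{2})$. The dual formulation of the norm is then $\mnorm{\ket\psi} = \max\left\{ |\langle\phi|\psi\rangle| : |\phi_i| \leq 1\ \forall i,\ \lnorm{\ket\phi}{2}^2 \leq m \right\}$, with the maximum attained at some $\ket\phi$. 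Setting $V = \ketbra{\phi}{\phi}$ furnishes a feasible point of the reformulated program: it is positive semidefinite, its diagonal satisfies $V_{ii} = |\phi_i|^2 \leq 1$, and its single nonzero eigenvalue equals $\lnorm{\ket\phi}{2}^2 \leq m$, so $V \leq m\mathbbm{1}$. Moreover $\langle\psi|V|\psi\rangle = |\langle\phi|\psi\rangle|^2 = \mnorm{\ket\psi}^2$, which closes the argument.

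I expect the ``$\leq$'' direction to be the main obstacle: the key point is that positive semidefiniteness of $V$ combined with the normalized diagonal $V_{ii} \leq 1$ forces every off-diagonal entry to obey $|V_{ij}| \leq 1$, which is precisely what converts $\langle x|V|x\rangle$ into an $\ell_1$ bound, while the clean split of the triangle inequality across the $\ell_1$ and $\ell_2$ pieces is what pins down the exact form of the $m$-distillation norm. By contrast, the dual-norm computation is routine once the infimal-convolution structure is recognized, and feasibility of the rank-one witness is immediate.
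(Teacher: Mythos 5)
Your proof is correct, and it takes a genuinely different route from the paper's. The paper performs the same affine shift $W' = W - \id$ to identify $\T{m-1}$ with a robustness measure $R_\Qm$ over the set $\Qm = \I \cup \frac{1}{m}\DD$, but then invokes an external result (Thm.~10 of Ref.~[regula\_2018]) stating that any such robustness, defined with respect to a convex hull of pure states drawn from a vector set $\V$, reduces on pure states to the squared gauge function of $\conv(\V)$; the $m$-distillation norm then appears as the infimal convolution of the gauges of the two constituent sets. You instead prove the needed special case of that lemma from scratch: for the upper bound, the triangle inequality applied to $V^{1/2}\ket\psi = V^{1/2}\ket x + V^{1/2}\ket y$ together with the observation that $V \geq 0$ and $V_{ii} \leq 1$ force $|V_{ij}| \leq 1$ (hence $\braket{x|V|x} \leq \lnorm{\ket x}{1}^2$) while $V \leq m\id$ controls the $\ket y$ term; for the lower bound, a rank-one witness $V = \proj\phi$ built from an optimizer of the dual norm $\max\big(\lnorm{\cdot}{\infty},\, m^{-1/2}\lnorm{\cdot}{2}\big)$. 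Both halves are sound, and the duality step (polar of a convex hull of unit balls is the intersection of the polars) is standard. What each approach buys: the paper's argument is shorter given the cited machinery and places the result inside a general framework for gauge-based resource quantifiers, whereas yours is self-contained and elementary, and as a side benefit explicitly derives the dual formula $\mnorm{\ket\psi} = \max \lset \cbraket{\phi|\psi} \bar \lnorm{\ket\phi}{\infty} \leq 1,\ \lnorm{\ket\phi}{2} \leq \sqrt{m} \rset$, which the paper only introduces later (in the proof of Theorem 2) and which is needed there anyway.
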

The proof of this Theorem relies on the fact that each $\T{m-1}$ can be viewed as a robustness measure $R_\Qm$ defined with respect to the set $\Qm \coloneqq \I \cup \frac{1}{m}\DD$. Each such quantity was shown in \cite{regula_2018} to reduce on pure states to a corresponding norm defined at the level of the underlying Hilbert space --- in this case, it is precisely the $m$-distillation norm $\mnorm{\ket\psi}$.

A property of the $m$-distillation norm which will be crucial in the characterization of coherence distillation is that it can, in fact, be computed exactly. In particular, the following holds.
\begin{theorem}\label{thm:maj_norm_formula}
For a pure state $\ket\psi$, let $\psi^\downarrow_{1:k}$ denote the vector consisting of the $k$ largest (by magnitude) coefficients of $\ket\psi$, and analogously let $\psi^\downarrow_{k+1:d}$ denote the vector of the $d-k$ smallest coefficients of $\ket\psi$, with $\psi^\downarrow_{1:0}$ being the zero vector. Then, for any pure state $\ket\psi$ and any integer $m \in \{1,\ldots,d\}$ we have
\begin{equation*}\begin{aligned}
  \mnorm{\ket\psi} = \lnorm{\psi^\downarrow_{1:m-k^\star}}{1} + \sqrt{k^\star} \lnorm{\psi^\downarrow_{m-k^\star+1:d}}{2},
\end{aligned}\end{equation*}
where $k^\star = \displaystyle\argmin_{1 \leq k \leq m} \mbox{$\frac{\lnorm{\psi^\downarrow_{m-k+1:d}}{2}^2}{k}$}.$
% \begin{equation}\begin{aligned}
%   k^\star = \argmin_{1 \leq k \leq m} \frac{1}{k}\lnorm{\psi^\downarrow_{m-k+1:d}}{2}^2.
% \end{aligned}\end{equation}
  % $k^\star = \displaystyle\argmin_{1 \leq k \leq m} \mbox{$\frac{\lnorm{\psi^\downarrow_{m-k+1:d}}{2}^2}{k}$}$
\end{theorem}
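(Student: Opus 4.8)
The plan is to pass to the dual formulation of the $m$-distillation norm and solve the resulting finite-dimensional convex program explicitly. First I would record two reductions. Since both $\lnorm{\cdot}{1}$ and $\lnorm{\cdot}{2}$ are invariant under coordinate permutations and under rephasing of individual coordinates, and since an optimal decomposition $\ket\psi = \ket x + \ket y$ may always be taken with $x_i,y_i$ phase-aligned to $\psi_i$, we may assume $\ket\psi = (p_1,\dots,p_d)$ with $p_1 \geq \cdots \geq p_d \geq 0$ and $\sum_i p_i^2 = 1$. Next, observing that $\mnorm{\cdot}$ is the infimal convolution of the two norms $\lnorm{\cdot}{1}$ and $\sqrt m\,\lnorm{\cdot}{2}$ (and is genuinely a norm, since $\mnorm{\ket\psi} \geq \lnorm{\ket\psi}{2}$), its dual norm is the pointwise maximum $\max\!\big(\lnorm{\cdot}{\infty},\, \tfrac{1}{\sqrt m}\lnorm{\cdot}{2}\big)$ of the two individual dual norms. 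Finite-dimensional norm duality then yields
\begin{equation*}\begin{aligned}
\mnorm{\ket\psi} = \max\Big\{ \textstyle\sum_i p_i w_i \;\Big|\; 0 \leq w_i \leq 1,\ \textstyle\sum_i w_i^2 \leq m \Big\},
\end{aligned}\end{equation*}
where the box and positivity constraints follow from $p_i \geq 0$, and feasibility is realized on the $\ell_2$-sphere precisely because $m \leq d$.

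Second, I would solve this concave maximization through its KKT conditions. Writing $\mu \geq 0$ for the multiplier of $\sum_i w_i^2 \leq m$, stationarity forces the threshold form $w_i = \min(p_i/c,\,1)$ with $c := 2\mu$: coordinates with $p_i \geq c$ saturate at $w_i = 1$, while the rest take $w_i = p_i/c$. If exactly the top $\ell = m-k$ coordinates saturate, then enforcing the active constraint $\sum_i w_i^2 = m$ fixes $c = c(k) := \sqrt{s_k/k}$ with $s_k := \lnorm{\psi^\downarrow_{m-k+1:d}}{2}^2$, and a short computation gives the candidate objective value
\begin{equation*}\begin{aligned}
\textstyle\sum_i p_i w_i = \lnorm{\psi^\downarrow_{1:m-k}}{1} + \sqrt{k}\,\lnorm{\psi^\downarrow_{m-k+1:d}}{2}.
\end{aligned}\end{equation*}
This already reproduces the asserted expression, so it remains only to identify which $k$ corresponds to the genuine maximizer.

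Finally, I would show that the correct index is exactly $k^\star = \argmin_{1 \leq k \leq m} s_k/k$. The point $w^{(k)}$ above is simultaneously dual-feasible and KKT-stationary iff the threshold respects the ordering, i.e. $p_{m-k+1} \leq c(k) \leq p_{m-k}$ (the left inequality is $w_i \leq 1$ on the unsaturated block, the right one is nonnegativity of the stationarity multiplier on the saturated block). The crux is to extract both inequalities at $k=k^\star$ from minimality of $s_k/k$: using $s_{k+1} = s_k + p_{m-k}^2$ one has the elementary equivalence $\tfrac{s_{k+1}}{k+1} \leq \tfrac{s_k}{k} \Leftrightarrow p_{m-k}^2 \leq s_k/k$, so minimality against $k^\star+1$ gives $p_{m-k^\star} \geq c(k^\star)$; comparing instead against $k^\star-1$ yields $s_{k^\star} \geq k^\star\, p_{m-k^\star+1}^2$, i.e. $p_{m-k^\star+1} \leq c(k^\star)$. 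Since the program is convex, feasibility together with the KKT conditions certifies $w^{(k^\star)}$ as a global maximizer, whence $\mnorm{\ket\psi}$ equals the displayed value at $k=k^\star$. I expect this last step --- converting the discrete $\argmin$ condition into the threshold-consistency inequalities, while handling the boundary cases $k^\star = m$ (where $\psi^\downarrow_{1:0}$ is empty) and $s_{k^\star}=0$ --- to be the main technical obstacle; the duality setup and the KKT calculation are otherwise routine.
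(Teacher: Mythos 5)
Your proposal is correct, and it reaches the result by a route that overlaps with the paper's on the dual side but differs in how optimality is certified. The paper proves the formula as a weak-duality sandwich: it exhibits an explicit primal decomposition $\ket\psi=\ket{x}+\ket{y}$ achieving $L_1+\sqrt{k^\star}L_2$ as an upper bound, and then a dual-feasible vector achieving the same value as a lower bound --- and that dual vector is exactly your $w^{(k^\star)}$ with threshold $c(k^\star)=L_2/\sqrt{k^\star}$. You instead never touch the primal: after identifying the dual program via the inf-convolution/dual-norm identity (the paper gets the same program by asserting strong Lagrange duality), you solve it outright by KKT, which forces the threshold form $w_i=\min(p_i/c,1)$ and \emph{derives} the candidate value rather than guessing it. The price is that you must verify both active-set consistency inequalities $p_{m-k^\star+1}\leq c(k^\star)\leq p_{m-k^\star}$, whereas the paper only needs the first one (its chain of inequalities in the supplemental proof is precisely your comparison of $s_k/k$ at $k^\star$ against $k^\star-1$; the comparison against $k^\star+1$, which you need for the stationarity multipliers on the saturated block, does not appear in the paper because a matching primal value makes it unnecessary). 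Your extraction of both inequalities from $s_{k+1}=s_k+p_{m-k}^2$ is correct, so the argument closes. What each approach buys: the paper's is more elementary and shorter (two feasible points, no KKT machinery, no active-constraint case analysis); yours explains where the formula and the $\argmin$ condition come from, and localizes all the work on one side of the duality. The remaining loose ends you flag --- $k^\star=m$ (no saturated block, so the right-hand consistency inequality is vacuous) and $s_{k^\star}=0$ (where $c=0$ and one takes $w_i=1$ on the support, recovering $\lnorm{\ket\psi}{1}$, consistent with the stated formula) --- are genuinely minor and resolve as you expect; you should also note that ties in the $\argmin$ all yield the same value, so the formula is well defined.
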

Theorem \ref{thm:maj_norm_formula} generalizes the recent result of Johnston et al. \cite{johnston_2017-1}, where an explicit formula for the modified trace distance $\T{1}$ was obtained for all pure states. Notice in particular that, if all coefficients of $\ket\psi$ satisfy $|\psi_i| \leq \frac{1}{\sqrt{m}}$, then $k^\star = m$ is optimal and we have $\mnorm{\ket\psi} = \sqrt{m}$, which means that $\T{m-1}(\psi)$ reaches its maximum value $m-1$. As a consequence, $\T{m-1}$ does not, in general, admit a unique maximizer in the form of the maximally coherent state, which in \cite{johnston_2017-1} was considered as a possible indication that this quantity is not suitable as a coherence quantifier. In the following, however, we will instead demonstrate its operational usefulness in the characterization of the fidelity of one-shot coherence distillation.

%%%%%%%%%%%%%%%%%%%%%%%%%%%%%%%%%%%%%%%%%%%%%%%%%%%%%%%%%%%%%%%%%%%%%%%%%%%%%%%%%%%%%%%%%%%%%%%%%%%%%%%%%
%%%%%%%%%%%%%%%%%%%%%%%%%%%%%%%%%%%%%%%%%%%%%%%%%%%%%%%%%%%%%%%%%%%%%%%%%%%%%%%%%%%%%%%%%%%%%%%%%%%%%%%%%

\section{Distillation of coherence}
We will denote by $\Psi_m = \proj{\Psi_m}$ the $m$-dimensional maximally coherent state $\ket{\Psi_m} = \sum_{i=1}^{m} \frac{1}{\sqrt{m}} \ket{i}$ in the reference basis. The {\em distillable coherence} $C^\infty_{d,\IO}(\rho)$ is the asymptotic rate at which $\Psi_2$ can be obtained per copy of a given state $\rho$ via incoherent operations.
% Formally, distillable coherence can be given as the supremum over all rates $r$ such that ~\cite{yuan_2015,winter_2016}
% \begin{equation}
% \lim_{n\rightarrow\infty} \inf_{\Lambda\in \text{IO}}\left\Vert \Lambda(\rho^{\otimes n})-\ket{\Phi_{2}}\!\bra{\Phi_{2}}^{\otimes\left\lfloor nr\right\rfloor }\right\Vert_1 =0 .\label{eq:Cd}
% \end{equation}
Winter and Yang~\cite{winter_2016} showed
that the distillable coherence of an
arbitrary mixed state coincides with the \emph{relative
entropy of coherence} $C_r(\rho) \coloneq \min_{\sigma\in \I} D(\rho\|\sigma)$ introduced in~\cite{aberg_2006}, where the quantum
relative entropy is given as $D(\rho\|\sigma)\coloneq\tr\rho(\log\rho-\log\sigma)$ with the logarithm taken in base 2.
For any state $\rho$, the distillable coherence is then given by $C^\infty_{d,\IO}(\rho) = C_r(\rho) = S(\D(\rho)) - S(\rho)$.

We now consider the non-asymptotic setting. For any quantum state $\rho$, the fidelity of coherence distillation under the class of operations $\O$ is defined by
  \begin{align}
       F_{\O}(\rho,m) \coloneq \max_{\Lambda \in \O} \< \Lambda(\rho), \Psi_m \>.
     \end{align}
The one-shot $\ve$-error distillable coherence is then defined as the maximum over all distillation rates achievable under the given class of operations with an error tolerance of $\ve$, that is,
\begin{equation}
C_{d,\O}^{(1),\ve}(\rho) := \log \max \lset m \in \mathbb N \bar F_\O(\rho,m) \geq 1- \ve \rset.
\label{one shot distillable coherence}
\end{equation}
As a consequence, the asymptotic distillable coherence can be given as
\begin{align}
  C^\infty_{d,\O}(\rho) = \lim_{\ve \to 0} \lim_{n \to \infty} \frac{1}{n} C_{d,\O}^{(1),\ve}(\rho^{\ox n}).
\end{align}

One of the main results of this work is that the one-shot distillable coherence can be computed exactly, as characterized in the following result.
\begin{theorem}
\label{thm:distillable_SDP}
  If $\O \in \{ \MIO, \DIO \}$, then for any state $\rho \in \DD$, the fidelity of coherence distillation and one-shot $\ve$-error distillable coherence can both be written as the following semidefinite programs:
  \begin{equation}\begin{aligned}
    F_{\O}(\rho,m)  =  \max \left\{ \<G, \rho\> \;\left|\; 0 \leq G \leq \id,\ \Delta(G) = \frac1m \id \right.\right\},
     \end{aligned}\end{equation}\vspace{-\baselineskip}\begin{equation*}\begin{aligned}\label{eq:dist_coherence}
  C_{d,\rm \O}^{(1),\ve}(\rho)  = \log \bigg\lfloor \max \bigg\{& m \;\bigg|\; \< G, \rho \> \geq 1-\ve,\\
  & 0 \leq G \leq \id,\; \Delta(G) = \frac{1}{m} \id \bigg\} \bigg\rfloor.
   \end{aligned}\end{equation*}
\end{theorem}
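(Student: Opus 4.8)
The plan is to establish the two inequalities between $F_\O(\rho,m)$ and the proposed SDP, exploiting that for any channel $\Lambda$ the overlap can be pushed onto the input through the adjoint: $\langle\Lambda(\rho),\Psi_m\rangle=\langle\rho,\Lambda^\dagger(\Psi_m)\rangle$. Writing $G=\Lambda^\dagger(\Psi_m)$, this $G$ plays the role of an effective POVM element, and since $\Lambda^\dagger$ is completely positive and unital while $0\leq\Psi_m\leq\id$, one gets $0\leq G\leq\id$ for free. The whole content of the theorem is then to pin down that the remaining constraint characterizing achievable $G$ is exactly $\Delta(G)=\frac1m\id$, for both $\O=\MIO$ and $\O=\DIO$.

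For the converse direction I would first reduce to channels whose output lives on the support of $\Psi_m$. Composing any $\Lambda\in\MIO$ with the pinch-and-dump channel $\mathcal{P}(Y)=P_mYP_m+\ketbra{1}{1}\,\Tr[(\id-P_m)Y]$, where $P_m=\sum_{i=1}^m\ketbra{i}{i}$, one checks that $\mathcal{P}\in\MIO\cap\DIO$ and, using $P_m\Psi_m P_m=\Psi_m$, that $\mathcal{P}\circ\Lambda$ does not decrease the overlap with $\Psi_m$. For the composed channel each $(\mathcal{P}\circ\Lambda)(\ketbra{i}{i})$ is a diagonal state supported on the first $m$ levels, so the operator $G=(\mathcal{P}\circ\Lambda)^\dagger(\Psi_m)$ satisfies $\Delta(G)_{ii}=\Tr[\Psi_m\,(\mathcal{P}\circ\Lambda)(\ketbra{i}{i})]=\frac1m$, i.e. $\Delta(G)=\frac1m\id$ exactly; together with $0\leq G\leq\id$ this is a feasible point of the SDP whose objective $\langle\rho,G\rangle$ dominates the original fidelity. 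Since $\DIO\subseteq\MIO$, the same bound yields $F_\DIO\leq F_\MIO\leq\text{SDP}$.

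For achievability I would exhibit a single explicit channel attaining any feasible $G$. Given $0\leq G\leq\id$ with $\Delta(G)=\frac1m\id$, define the measure-and-prepare map
\[
 \Lambda(X)=\langle G,X\rangle\,\Psi_m+\langle\id-G,X\rangle\,\frac{P_m-\Psi_m}{m-1}.
\]
This is manifestly CPTP, and since $\langle P_m-\Psi_m,\Psi_m\rangle=0$ its adjoint gives $\Lambda^\dagger(\Psi_m)=G$, so $\langle\Lambda(\rho),\Psi_m\rangle=\langle\rho,G\rangle$ for every $\rho$. The crucial point is the choice of the ``failure'' preparation as the normalized projector onto the complement of $\ket{\Psi_m}$ within the $m$-dimensional block: from $\Delta(G)=\frac1m\id$ one gets $\langle G,\delta\rangle=\frac1m$ for every incoherent $\delta$, whence $\Lambda(\delta)=\frac1m\Psi_m+\frac{m-1}{m}\cdot\frac{P_m-\Psi_m}{m-1}=\frac1m P_m\in\I$, so $\Lambda\in\MIO$; and a direct computation shows $\Delta\Lambda$ and $\Lambda\Delta$ both equal $X\mapsto\frac{\Tr X}{m}P_m$, so $\Lambda\in\DIO$ as well. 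This places the constructed channel in $\MIO\cap\DIO$ and closes the chain $\text{SDP}\leq F_\DIO\leq F_\MIO\leq\text{SDP}$, simultaneously proving equality for both classes (and incidentally that $F_\MIO=F_\DIO$). The one-shot distillable-coherence formula then follows directly from its definition by substituting the fidelity SDP and merging the maximization over $G$ with the maximization over $m\in\NN$ under the extra constraint $\langle G,\rho\rangle\geq 1-\varepsilon$, the floor arising from the integrality of $m$.

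The main obstacle is the achievability construction. The naive choice of preparing $\Psi_m$ on success and an arbitrary fixed incoherent state on failure is \emph{not} incoherence-preserving, because incoherent inputs would then be mapped to $\frac1m\Psi_m$ plus a diagonal term, which retains off-diagonal coherence. The resolution, and the crux of the argument, is recognizing that the failure branch must prepare precisely $\frac{P_m-\Psi_m}{m-1}$, so that incoherent inputs are sent to the maximally mixed state $\frac1m P_m$ on the first $m$ levels, cancelling all coherence and landing the map in $\MIO$ and $\DIO$ at once. The remaining checks (feasibility of $\mathcal{P}$, the dephasing-covariance commutation) are routine once this state is identified.
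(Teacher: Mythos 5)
Your proposal is correct, and while the achievability half coincides with the paper's, the converse half takes a genuinely different route. The paper works entirely at the level of the Choi matrix $J$ of the channel: it twirls $J$ over the permutation unitaries of the output basis (under which $\Psi_m$ is invariant), invokes the commutant structure to conclude $J = Q \ox (\Psi_m - \frac1m \id_m) + \id_d \ox \frac1m \id_m$ with $\Delta(Q)=0$, and reads off both the SDP and the optimal operation from this normal form in one stroke; the equality of $\MIO$ and $\DIO$ then falls out because the symmetrized Choi matrix is automatically dephasing-covariant. You instead work in the Heisenberg picture with $G=\Lambda^\dagger(\Psi_m)$ and obtain the constraint $\Delta(G)=\frac1m\id$ by composing with a pinch-and-dump map onto the support of $\Psi_m$ and using that the diagonal of $\Psi_m$ is uniform there --- an argument that avoids the twirling step entirely (and in particular avoids having to justify that the symmetrized channel remains in $\MIO$ with no loss of fidelity, a point the paper passes over quickly). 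Your explicit measure-and-prepare channel $\Lambda(X)=\<G,X\>\Psi_m+\<\id-G,X\>\frac{P_m-\Psi_m}{m-1}$ is exactly the optimal operation the paper records at the end of its proof, and your identification of the failure branch $\frac{P_m-\Psi_m}{m-1}$ as the crux of membership in $\MIO\cap\DIO$ is precisely the right point to emphasize. The only cosmetic caveat, shared with the paper, is that this channel is defined for $m\geq 2$ (the case $m=1$ being trivial on both sides). The derivation of the $C_{d,\O}^{(1),\ve}$ formula from the fidelity SDP is the same in both treatments.
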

The result reveals a fundamental relation between different sets of operations in the resource theory of coherence, showing that MIO and DIO have the same power in the task of coherence distillation. This correspondence is in fact surprising: not only is DIO a strict subset of MIO, it is also known that MIO is strictly more powerful than DIO in state transformations \cite{chitambar_2016-1,marvian_2016}, that there exist entropic coherence monotones under DIO which are not monotones under MIO \cite{chitambar_2016-1}, and that the two sets can exhibit different operational capabilities in tasks such as coherence dilution \cite{zhao_2018}. Furthermore, since MIO constitutes the largest class of free operations in the resource theory of coherence, the result is of practical relevance as it shows that using DIO is sufficient to achieve the best rates of distillation achievable under any class of free operations.

We will now show that the quantities introduced in Theorem \ref{thm:distillable_SDP} admit alternative characterizations. In particular, we will express the fidelity of distillation as a measure related to the family $\T{m}$ introduced before, and the one-shot distillable coherence as a quantum hypothesis testing problem. To do so, we will need to optimize over a larger set of matrices than the incoherent states $\I$: namely, the set $\J \coloneqq \lset X \bar \Tr(X) = 1,\;\Delta(X) = X \rset$ of unit-trace diagonal Hermitian matrices, and analogously the set $\J\*\*$ of unnormalized diagonal matrices. We then define the quantities
\begin{equation}\begin{aligned}
  \TJ{m}(\rho) &\coloneqq \min_{X \in \J\*\*}\; m \Tr\left(\rho-X\right)_+ + \Tr\left(\rho-X\right)_-\\
  =&  \max \lset \< \rho, W \> \bar  -\mathbbm{1} \mleq W \mleq m\mathbbm{1},\; \Delta(W) = 0 \rset,
\end{aligned}\end{equation}
in analogy with the measures $\T{m}$. Following the proof of Prop. \ref{prop:Tm_monotonic} one can easily see that $\TJ{m}$ are also faithful strong monotones under MIO. Let us now consider the \textit{hypothesis testing relative entropy} $D^\ve_H$ \cite{wang_2012,tomamichel_2013}, defined as
 \begin{equation}\begin{aligned}
\label{hypothesis testing definition}
D_H^\ve(\rho||\sigma) \coloneqq-\log\min \{  \< M, \sigma\> \;&|\; 0\le M\le \id,\\
&1- \< M, \rho \> \le\ve \}.
\end{aligned}\end{equation}
In the setting of quantum hypothesis testing, one is interested in distinguishing between two quantum states --- $\rho$ and $\sigma$ --- by performing a test measurement $\{M, \id - M\}$ where $0 \mleq M \mleq \id$. The probability of incorrectly accepting state $\sigma$ as true (type-I error) is given by $\<\id - M, \rho\>$, and the probability of incorrectly accepting state $\rho$ as true (type-II error) is given by $\< M, \sigma \>$ \cite{hayashi_2016}. The quantity $D_H^\ve(\rho||\sigma)$ then characterizes the minimum type-II error while constraining the type-I error to be no greater than $\ve$. Alternatively, $D_H^\ve(\rho||\sigma)$ can be viewed as the operator-smoothed version of min-relative entropy \cite{datta_2009,buscemi_2010-1}. Using this quantity, we can obtain the following result.
\begin{proposition}\label{prop:hypothesis_testing}
The fidelity of coherence distillation and one-shot $\ve$-error distillable coherence under $\O \in \{ \MIO, \DIO \}$ admit a characterization as the semidefinite programs
\begin{equation}\begin{aligned}
   F_{\O}(\rho,m) &= \frac{1}{m} \left(\TJ{m-1}(\rho) + 1\right)\,,\\
   C_{d,\rm \O}^{(1),\ve}(\rho) & = \min_{X \in \J} D_H^\ve(\rho||X) - \delta\,,
\end{aligned}\end{equation}
  where $\delta \geq 0$ is the least number such that the solution corresponds to the logarithm of an integer.
\end{proposition}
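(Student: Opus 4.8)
The plan is to derive both identities from the SDP characterisations of Theorem~\ref{thm:distillable_SDP} by suitable affine substitutions and convex duality, treating the two statements separately.

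For the fidelity I would begin with $F_\O(\rho,m)=\max\{\<G,\rho\> : 0\leq G\leq\id,\ \Delta(G)=\frac1m\id\}$ and substitute $W=mG-\id$. Since $\Delta(\id)=\id$, the constraint $\Delta(G)=\frac1m\id$ becomes $\Delta(W)=0$, while $0\leq G\leq\id$ becomes $-\id\leq W\leq(m-1)\id$; using $\Tr\rho=1$ the objective transforms into $\<G,\rho\>=\frac1m(\<W,\rho\>+1)$. The resulting program is exactly the dual form of $\TJ{m-1}$, yielding $F_\O(\rho,m)=\frac1m(\TJ{m-1}(\rho)+1)$. This part is routine once the substitution is identified.

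For the distillable coherence, let $m^\star:=\max\{m : \exists G,\ \<G,\rho\>\geq1-\ve,\ 0\leq G\leq\id,\ \Delta(G)=\frac1m\id\}$ be the optimal real-valued SDP solution, so that $C^{(1),\ve}_{d,\O}(\rho)=\log\lfloor m^\star\rfloor$; I would show $\min_{X\in\J}D_H^\ve(\rho\|X)=\log m^\star$, whereupon $\delta=\log m^\star-\log\lfloor m^\star\rfloor\geq0$ is precisely the correction enforcing an integer rate. One inequality is immediate from a key observation: whenever $\Delta(G)=\frac1m\id$, every diagonal unit-trace $X\in\J$ satisfies $\<G,X\>=\Tr(\Delta(G)X)=\frac1m\Tr X=\frac1m$, so an optimal distillation operator $G$ doubles as a feasible hypothesis test achieving type-II error $\frac1m$ against \emph{every} $X\in\J$ at once. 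Hence $D_H^\ve(\rho\|X)\geq\log m^\star$ for all $X\in\J$, giving $\min_{X\in\J}D_H^\ve(\rho\|X)\geq\log m^\star$.

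For the reverse inequality I would pass to the minimax form $\max_{X\in\J}\beta(\rho,X)=\max_{X\in\J}\min_M\<M,X\>$, where $\beta(\rho,X):=\min\{\<M,X\> : 0\leq M\leq\id,\ \<M,\rho\>\geq1-\ve\}$ and $D_H^\ve(\rho\|X)=-\log\beta(\rho,X)$. The decisive step is the interchange of the two optimisations, followed by the inner maximisation $\max_{X\in\J}\<M,X\>=\max_{\sum_i x_i=1}\sum_i M_{ii}x_i$, which equals $+\infty$ unless all diagonal entries coincide, i.e.\ $\Delta(M)=c\id$, and equals $c$ in that case. This collapses $\min_M\max_{X\in\J}\<M,X\>$ to $\min\{c : 0\leq M\leq\id,\ \<M,\rho\>\geq1-\ve,\ \Delta(M)=c\id\}$, which under the identification $c=\frac1m$ is exactly $\frac1{m^\star}$; taking $-\log$ returns $\log m^\star$. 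I expect the main obstacle to lie in justifying the max--min exchange despite the non-compactness of $\J$: this is resolved by exactly the finiteness observation above, since the inner maximum is finite only on the subspace $\Delta(M)\propto\id$, restricting the effective domain to a compact set on which Sion's minimax theorem applies.
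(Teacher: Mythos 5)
Your proposal is correct and follows essentially the same route as the paper: the fidelity identity is the affine change of variables already implicit in the proof of Theorem 3, and the hypothesis-testing identity rests on the same key fact that $\max_{X \in \J}\<M,X\>$ is finite (and equal to $c$) precisely when $\Delta(M)=c\id$, combined with a Sion minimax exchange. The only cosmetic differences are that you run the exchange starting from the hypothesis-testing side and supply a direct operational argument for the easy inequality; also note that the non-compactness of $\J$ is not actually an obstacle, since Sion's theorem requires compactness only of the set being minimized over, which here is $\lset M \bar 0 \leq M \leq \id,\ \<M,\rho\> \geq 1-\ve \rset$.
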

Although the optimization in the above problems is over matrices which are not necessarily positive semidefinite, one can show that if one of the problems admits a positive semidefinite optimal solution, then so does the other. In the particular case of $m=d$, not only does the fidelity of distillation simplify to an optimization over $\I$, but combining Prop. \ref{prop:hypothesis_testing} with Thm. 1 of Ref. \cite{bu_2017} we know that in fact $F_{\O}(\rho,d)$ is the same for any $\O\in\{\MIO,\DIO,\SIO,\IO\}$. However, the case of interest is when such a property holds for any value of $m$ --- we will now show that this is true for all pure states, significantly simplifying the computation of the above quantities.

We first notice that each $\T{m}$ provides an upper bound on the corresponding $\TJ{m}$, giving $F_{\MIO}(\rho, m) \leq \frac{1}{m} \left(\T{m-1}(\rho) + 1\right)$.
To show that this bound is in fact tight for all pure states, we consider different sets of operations --- SIO as well as IO. Pure state transformations under IO and SIO are known to be fully characterized by majorization relations \cite{nielsen_1999,winter_2016,chitambar_2016-1,zhu_2018-1}, which allow us to lower bound the fidelity of distillation and obtain the following result.
\begin{theorem}\label{prop:all_F_equal}
For any pure state $\ket\psi$, any integer $m \geq 1$, and $\O \in \{ \MIO, \DIO, \SIO, \IO \}$,
\begin{equation}\begin{aligned}
  F_{\O}(\psi,m) &= \frac{1}{m} \mnorm{\ket\psi}^2.
\end{aligned}\end{equation}
\end{theorem}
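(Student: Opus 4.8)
The plan is to prove the identity by squeezing $F_{\O}(\psi,m)$ between matching upper and lower bounds that are valid across all four classes, exploiting the inclusions $\SIO \subseteq \IO \subseteq \MIO$ and $\SIO \subseteq \DIO \subseteq \MIO$. Since $F_{\O}(\psi,m)$ is a maximum of $\braket{\Lambda(\psi)|\Psi_m}$ over $\Lambda \in \O$, these inclusions immediately give $F_{\SIO}(\psi,m) \leq F_{\O}(\psi,m) \leq F_{\MIO}(\psi,m)$ for every $\O$ in the list, so it suffices to show $F_{\MIO}(\psi,m) \leq \frac1m \mnorm{\ket\psi}^2 \leq F_{\SIO}(\psi,m)$.

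For the upper bound I would proceed as already indicated above: Proposition \ref{prop:hypothesis_testing} gives $F_{\MIO}(\rho,m) = \frac1m\big(\TJ{m-1}(\rho)+1\big)$, and the inequality $\TJ{m-1} \leq \T{m-1}$ (the feasible set defining $\TJ{m-1}$ is that of $\T{m-1}$ with $\Delta(W)\leq 0$ strengthened to $\Delta(W)=0$) yields $F_{\MIO}(\psi,m) \leq \frac1m\big(\T{m-1}(\psi)+1\big)$. Theorem \ref{thm:maj_norm_trace_distance} then evaluates $\T{m-1}(\psi) = \mnorm{\ket\psi}^2 - 1$ on pure states, giving exactly $F_{\MIO}(\psi,m) \leq \frac1m\mnorm{\ket\psi}^2$.

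The substance of the proof is the matching lower bound, which I would obtain from an explicit SIO protocol built from the optimal decomposition underlying Theorem \ref{thm:maj_norm_formula}. Writing $p = m - k^\star$ and $b = \lnorm{\psi^\downarrow_{p+1:d}}{2}$, I define a target pure state $\ket\phi$ supported on $m$ coordinates whose first $p$ amplitudes are the largest magnitudes $|\psi^\downarrow_1|,\dots,|\psi^\downarrow_p|$ and whose remaining $k^\star$ amplitudes all equal $b/\sqrt{k^\star}$. This $\ket\phi$ is normalized, and after aligning phases and fixing its support by an incoherent unitary and a permutation (both in SIO), a direct computation gives $\braket{\Psi_m|\phi} = \frac1{\sqrt m}\big(\sum_{i=1}^p |\psi^\downarrow_i| + \sqrt{k^\star}\,b\big) = \frac1{\sqrt m}\mnorm{\ket\psi}$, so that deterministically producing $\ket\phi$ achieves distillation fidelity exactly $\frac1m\mnorm{\ket\psi}^2$.

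The main obstacle — and the only genuinely nontrivial step — is to verify that $\ket\phi$ is actually reachable from $\ket\psi$ under SIO, i.e.\ that the vector $\mu(\psi)$ of squared magnitudes is majorized by $\mu(\phi)$. Writing $s_i = |\psi^\downarrow_i|^2$ and $t = b^2/k^\star$, one has $\mu(\phi) = (s_1,\dots,s_p,t,\dots,t,0,\dots,0)$ with $k^\star$ copies of $t$, and I expect both the correct-sorting condition and all partial-sum inequalities to follow from the optimality of $k^\star$ as the minimizer of $g(k) = \lnorm{\psi^\downarrow_{m-k+1:d}}{2}^2/k$. Concretely, $g(k^\star) \leq g(k^\star+1)$ rearranges to $t \leq s_p$ (so $\mu(\phi)$ is already sorted), while $g(k^\star) \leq g(k^\star-1)$ rearranges to $s_{p+1}\leq t$; the latter forces the average of the top $j$ tail entries to be at most $t$ for every $1\leq j\leq k^\star$, which is precisely the family of inequalities $(\ell-p)t \geq \sum_{i=p+1}^\ell s_i$ needed for majorization on the range $p<\ell\leq m$ (the range $\ell\leq p$ giving equalities and $\ell>m$ being trivial, with the degenerate case $p=0$, where $\ket\phi=\ket{\Psi_m}$, immediate). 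Invoking the majorization characterization of pure-state SIO transformations then yields $\ket\psi\to\ket\phi$ under SIO, establishing $F_{\SIO}(\psi,m)\geq \frac1m\mnorm{\ket\psi}^2$ and, combined with the upper bound and the inclusions, the claimed equality for all four classes.
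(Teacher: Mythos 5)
Your proof is correct and takes essentially the same route as the paper's: the same squeeze $F_{\SIO}\le F_{\O}\le F_{\MIO}\le \frac1m\left(\T{m-1}(\psi)+1\right)=\frac1m\mnorm{\ket\psi}^2$, followed by the same explicit target state (your $\ket\phi$ is exactly the paper's $\ket\eta$) reached via the majorization criterion for pure-state SIO/IO transformations. Your derivation of the full set of majorization inequalities from the optimality conditions $g(k^\star)\le g(k^\star\pm 1)$ is actually spelled out in more detail than in the paper, which only records $\psi^2_{m-k^\star+1}\le L_2^2/k^\star$; the sole detail you skip is the degenerate case $m>d$, which the paper dispatches by reducing to $m=d$.
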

This extends the operational equivalence between MIO and DIO in coherence distillation to the strictly smaller set SIO, and has several important consequences.
Firstly, it shows that the one-shot distillable coherence of pure states under any of the classes of operations $\O \in \{\MIO, \DIO, \SIO, \IO\}$ is exactly the same, and in fact can be expressed as the quantum hypothesis testing problem $C_{d,\rm \O}^{(1),\ve}(\psi) = \min_{\sigma \in \I} D_H^\ve(\psi||\sigma) - \delta$ with $\delta$ as before. Secondly, we can use the properties of the $m$-distillation norm to obtain exact formulas for the one-shot distillable coherence. In particular, noting that $\mnorm{\ket\psi} = \sqrt{m}$ (or equivalently $F_\O(\psi,m) = 1$) if and only if $\lnorm{\ket\psi}{\infty} \leq \frac{1}{\sqrt{m}}$, we see that the zero-error distillable coherence is given by
\begin{equation}\begin{aligned}
  C^{(1),0}_{d,\O} (\psi)& = \log \left\lfloor \lnorm{\ket\psi}{\infty}^{-2} \right\rfloor.
\end{aligned}\end{equation}

Relating the $m$-distillation norm with the fidelity of distillation also allows us to more easily make quantitative statements about the distillability of pure states on average. For example, in Ref. \cite{johnston_2017-1} it was shown that the proportion of pure states with respect to the Haar measure for which $\norm{\ket\psi}{[2]} = 1$ is given by $1 - d \,2^{1-d}$, which sharply tends to $1$ as $d$ increases. In light of our results, this then shows that, with growing dimension, only an exponentially small fraction of pure states are one-shot undistillable, while a significant majority of pure states satisfy $C_{d,\rm \O}^{(1),0}(\psi) \geq 1$ and therefore allow for a zero-error one-shot distillation of at least one bit of coherence.

The results of our work have important consequences beyond the one-shot regime, in particular for the {\it asymptotic} reversibility of state transformations in the resource theory of coherence --- that is, the question whether the amount of coherence which can be distilled from a number of copies of a state $\rho$ (distillable coherence $C^\infty_d$) is the same as the amount of coherence needed to prepare the same number of copies (coherence cost $C^\infty_c$) in the asymptotic limit of an arbitrarily large number of i.i.d. copies. It is known that the resource theory of coherence is reversible under MIO \cite{winter_2016,zhao_2018}, but irreversible under IO as we have $C^\infty_{d,\IO} (\rho) < C^\infty_{c,\IO}(\rho)$ in general \cite{winter_2016}. Recently, it has been claimed that $C^\infty_{c,\DIO} (\rho) = C^\infty_{c,\MIO}(\rho) = C_r(\rho)$ \cite{zhao_2018}, although a complete proof of this fact did not appear until \cite{chitambar_2017-1}. Our result in Thm. \ref{thm:distillable_SDP} in particular shows that $C_{d,\DIO}^{(1),\ve}(\rho) = C_{d,\MIO}^{(1),\ve}(\rho)$ and therefore $C^\infty_{d,\DIO} (\rho) = C^\infty_{d,\MIO}(\rho) = C_r(\rho)$, complementing the claims of Ref. \cite{zhao_2018} and strengthening the asymptotic results of Ref. \cite{chitambar_2017-1} by showing their applicability even in the one-shot case. The fact that state transformations are indeed reversible under DIO and the maximal set of operations MIO is not necessary for full reversibility contrasts with other resource theories such as entanglement, where the only set of operations known to provide asymptotic reversibility is strictly larger than the maximal set \cite{vidal_2001,vidal_2002-2,brandao_2008-1,wang_2017-1}.

%%%%%%%%%%%%%%%%%%%%%%%%%%%%%%%%%%%%%%%%%%%%%%%%%%%%%%%%%%%%%%%%%%%%%%%%%%%%%%%%%%%%%%%%%%%%

\section{Conclusions}
 We have characterized the operational task of one-shot coherence distillation for several classes of free operations, showing in particular that MIO and DIO have the same power in this task, and providing computable expressions for the rates of distillation in terms of a quantum hypothesis testing problem. Further, we have introduced a family of coherence measures and related it to the achievable fidelity of distillation. By quantifying the introduced measures exactly on pure states and showing that they reduce to a class of much simpler vector norms, we have obtained a full characterization of one-shot coherence distillation from pure states, and established that in this case all relevant sets of operations are equally useful.

 Our work unveils several new features of the resource theory of coherence and contributes to a better understanding of the properties of the different sets of free operations, as well as generalizes and provides an operational interpretation to several coherence monotones. This yields further insight on how quantum coherence can be created and transformed in the realistic setting of finitely many quantum states available.

 The possible applications of one-shot coherence distillation are multifold. Notably, the framework presented herein can be used to precisely characterize the experimentally feasible rates at which maximally coherent states, often employed as ``currency'' in operational tasks, can be prepared. One such application is randomness extraction \cite{yuan_2015}, which can be implemented by distilling the coherence of a quantum state followed by a measurement generating uniformly random bits. Another promising way of utilizing one-shot coherence distillation is to enable coherent state preparation for direct use in quantum key distribution and quantum algorithms \cite{scarani_2009,coles_2016,hillery_2016}.
 Furthermore, the comparison of the operational capabilities of different classes of operations provides, in particular, new insight about the relatively unexplored class DIO, whose relation with the so-called thermal operations could find use in the resource theory of quantum thermodynamics \cite{lostaglio_2015-1,cwiklinski_2015}.\\[\baselineskip]

\begin{acknowledgments}

\textit{Note.} --- During the completion of this work, we became aware of an independent work by E. Chitambar \cite{chitambar_2017-1} where the author considers the asymptotic properties of state transformations under DIO and in particular obtains a different proof that the asymptotic rate of coherence distillation under MIO and DIO is the same.

We are grateful to Eric Chitambar, Min-Hsiu Hsieh, Nathaniel Johnston, Ludovico Lami, Andreas Winter, and Wei Xie for discussions. BR and GA acknowledge financial support from the European Research Council (ERC) under the Starting Grant GQCOP (Grant No.~637352). KF and XW were partly supported by the Australian Research Council (Grant No. DP120103776 and No. FT120100449).

\end{acknowledgments}

\sloppy
\bibliographystyle{apsrev4-1}
\bibliography{main}
\fussy

%%%%%%%%%%%%%%%%%%%%%%%%%%%%%%%%%%%%%%%%%%%%%%%%%%%%%%%%%%%%%%%%%%%%%%%%%%%%%%%%%%%%%%%%%%%%%%%%%%%%%%%%%%%%%%%%%%%%%%%%%%%%%%%%%%%%%%%%%

%\clearpage

\onecolumngrid
\begin{center}
\vspace*{\baselineskip}
{\textbf{\large Supplemental Material: \\[3pt] One-shot coherence distillation}}\\[1pt] \quad \\
\end{center}
%\twocolumngrid

\renewcommand{\theequation}{S\arabic{equation}}
\setcounter{equation}{0}
\setcounter{figure}{0}
\setcounter{table}{0}
\setcounter{section}{0}
\setcounter{page}{1}
\makeatletter

\section{Properties of the family of coherence monotones}

\begingroup
\renewcommand\theproposition{1}
\begin{proposition}
For each $m\geq 1$, $\T{m}$ is a faithful and convex measure of coherence satisfying strong monotonicity under MIO.
 \end{proposition}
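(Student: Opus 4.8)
The plan is to exploit the two dual descriptions of $\T{m}$ recorded in the main text. Write $\mathcal{W}_m := \{W : -\id \mleq W \mleq m\id,\ \Delta(W)\mleq 0\}$, a convex and compact set, so that the dual form reads $\T{m}(\rho) = \max_{W\in\mathcal{W}_m}\<\rho,W\>$, a support function; the primal form is $\T{m}(\rho) = \min_{X\in\I\*\*} m\Tr(\rho-X)_+ + \Tr(\rho-X)_-$. Convexity is then immediate, since a pointwise maximum of the linear functionals $\rho \mapsto \<\rho,W\>$ is convex (this is also the positively homogeneous, subadditive gauge structure referred to in \cite{regula_2018}). Nonnegativity follows because $W=0\in\mathcal{W}_m$, so $\T{m}(\rho)\geq 0$ for every $\rho$.

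For faithfulness I would argue from the primal form, whose minimum is attained (the objective is continuous and grows as $\|X\|\to\infty$ through the negative part). If $\rho\in\I$, choosing $X=\rho\in\I\*\*$ gives value $0$, hence $\T{m}(\rho)=0$. Conversely, if $\T{m}(\rho)=0$, let $X$ be optimal; since $m\geq 1$ and both $\Tr(\rho-X)_+$ and $\Tr(\rho-X)_-$ are nonnegative, their sum can vanish only if each does. But $\Tr(\rho-X)_+=0$ forces $\rho-X\mleq 0$ and $\Tr(\rho-X)_-=0$ forces $\rho-X\mgeq 0$, so $\rho=X\in\I\*\*$; as $\rho$ has unit trace it is a diagonal density matrix, i.e.\ $\rho\in\I$.

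Both monotonicity and strong monotonicity I would deduce from a single computation in the Heisenberg picture. The key lemma is: if $\{\cE_i\}$ are completely positive maps with $\cE=\sum_i\cE_i$ trace preserving and each generating no coherence, $\cE_i(\I)\subseteq\I\*\*$, then for any $W_i\in\mathcal{W}_m$ the operator $W:=\sum_i\cE_i^\dagger(W_i)$ again lies in $\mathcal{W}_m$. To verify this I would check the two defining constraints separately. Complete positivity gives $-\cE_i^\dagger(\id)\mleq\cE_i^\dagger(W_i)\mleq m\,\cE_i^\dagger(\id)$, and summing while using unitality $\sum_i\cE_i^\dagger(\id)=\cE^\dagger(\id)=\id$ (as $\cE$ is trace preserving) yields $-\id\mleq W\mleq m\id$. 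For the dephasing constraint, the $j$-th diagonal entry is $\<j|W|j\>=\sum_i\<\cE_i(\proj{j}),W_i\>$; since $\proj{j}\in\I$, the hypothesis forces $\cE_i(\proj{j})=\sum_k r^{(i)}_{k|j}\proj{k}\in\I\*\*$ with $r^{(i)}_{k|j}\geq 0$, so $\<\cE_i(\proj{j}),W_i\>=\sum_k r^{(i)}_{k|j}\<k|W_i|k\>\mleq 0$ because $\Delta(W_i)\mleq 0$, giving $\Delta(W)\mleq 0$. Granting the lemma, let $q_i=\Tr\cE_i(\rho)$, $\rho_i=\cE_i(\rho)/q_i$, and pick $W_i\in\mathcal{W}_m$ attaining $\T{m}(\rho_i)$ (possible by compactness); then
\begin{equation*}
  \sum_i q_i\,\T{m}(\rho_i) = \sum_i \<\rho,\cE_i^\dagger(W_i)\> = \<\rho,W\> \mleq \T{m}(\rho),
\end{equation*}
which is strong monotonicity under MIO, and taking a single map recovers ordinary monotonicity $\T{m}(\cE(\rho))\mleq\T{m}(\rho)$.

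The main obstacle I anticipate is conceptual rather than computational: fixing the correct notion of a free \emph{selective} MIO operation, since MIO is defined only at the channel level and admits no canonical Kraus form. The natural minimal requirement—that each branch $\cE_i$ individually create no coherence, $\cE_i(\I)\subseteq\I\*\*$—is precisely what fixes the sign in the diagonal computation above; once this definition is stated explicitly, the remainder is a routine check that the constraints defining $\mathcal{W}_m$ are preserved under the adjoint maps. I would therefore make the instrument definition precise before invoking strong monotonicity, and confirm that the corresponding $\cE=\sum_i\cE_i$ is indeed MIO (which follows automatically, since each $\cE_i(\I)\subseteq\I\*\*$ implies $\cE(\I)\subseteq\I$ by trace preservation).
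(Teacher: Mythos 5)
Your proof is correct, but it takes a genuinely different route from the paper's. The paper recasts $\T{m}$ as the gauge function $\Gamma_{\C_m}$ of the convex hull of $\C_m = (-\DD) \cup \frac{1}{m}\DD \cup \I\*\*$, from which convexity and faithfulness are immediate, and then obtains strong monotonicity by observing that $\conv(\C_m)$ is preserved under normalized MIO maps and invoking a general theorem from Ref.~\cite{regula_2018}. You instead work directly in the Heisenberg picture: your key lemma---that the dual feasible set $\mathcal{W}_m = \lset W \bar -\id \mleq W \mleq m\id,\ \Delta(W) \mleq 0\rset$ is closed under $W \mapsto \sum_i \cE_i^\dagger(W_i)$ for any incoherence-preserving instrument $\{\cE_i\}$---is verified by an elementary two-line check of the constraints, and strong monotonicity falls out of the support-function form; faithfulness you extract from the primal expression in Eq.~\eqref{eq:modified-trace-primal} rather than from the gauge structure. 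Your approach buys self-containedness (no appeal to the external gauge-function machinery) and makes transparent exactly which structural feature of MIO is used, namely that each branch maps diagonal projectors into $\I\*\*$ so that $\Delta(W_i)\mleq 0$ propagates; the paper's approach buys brevity and situates $\T{m}$ within a general family of resource monotones. Your anticipated obstacle about defining selective MIO operations is resolved exactly as you suggest: the paper's Remark adopts the instrument-level definition in which each $\Theta_i$ is a completely positive trace non-increasing map sending incoherent states into $\I\*\*$ (the word ``non-decreasing'' there is evidently a typo), which is precisely the hypothesis your lemma needs. Two cosmetic points: you should discard branches with $\Tr\cE_i(\rho)=0$ from the sum, and your attainment argument for the primal minimum (coercivity of $X \mapsto \Tr(\rho-X)_-$ on $\I\*\*$) is right but worth stating as you do, since the converse direction of faithfulness genuinely uses it.
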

 \endgroup
 \begin{remark}
By strong monotonicity under MIO, we understand the notion that
\begin{equation}\begin{aligned}
  \sum_i \Tr\left[\Theta_i(\rho)\right]\; \T{m}\left(\frac{\Theta_i(\rho)}{\Tr\left[\Theta_i(\rho)\right]}\right) \leq \T{m}(\rho)
\end{aligned}\end{equation}
where $\Theta_i$ are completely positive, trace non-decreasing maximally incoherent operations such that $\sum_i\Theta_i$ is trace preserving. This notion of strong monotonicity has been considered e.g. in \cite{piani_2016} and is stronger than the one commonly encountered in the literature, wherein each $\Theta_i$ is identified with the action of a single Kraus operator \cite{baumgratz_2014}.
\end{remark}
 \begin{proof}
Let
$\Gamma_\C(\rho) = \inf \lset \lambda \geq 0 \bar \rho \in \lambda \conv(\C) \rset = \sup \lset \< \rho, W \> \bar \< W, \sigma\> \leq 1 \;\forall \sigma \in \C \rset$
denote the gauge function corresponding to the set $\conv(\C)$. Each such function is convex. From the dual characterization of $\T{m}$ in Eq. \eqref{eq:family_Tm} it follows that we have $\T{m}(\rho) = \Gamma_{\C_m}(\rho)$ where $\C_m = (-\DD) \cup \frac{1}{m} \DD \cup \I\*\*$. Since $\Gamma_{\C_m}$ is zero only for $X \in \I\*\*$, each $\T{m}$ is faithful. Since $\rho \in \conv(\C_m) \Rightarrow \frac{1}{\Tr\Lambda(\rho)}\Lambda(\rho) \in \conv(\C_m)$ for each $\Lambda \in \text{MIO}$, each $\T{m}$ is strongly monotonic under MIO \cite[Thm. 20]{regula_2018}.
 \end{proof}

%%%%%%%%%%%%%%%%%%%%%%%%%%%%%%%%%%%%%%%%%%%%%%%%%%%%%%%%%%%%%%%%%%%%%%%%%%%%%%%%%%%%%%%%%%%%%%%%%%%%%%%%%%%%%%%%%%%%%%%%%%%%%%%%%%%%%%%%%%%%%%%%%%%%%%

\begingroup
\renewcommand\thetheorem{1}
\begin{theorem}
For any pure state $\ket\psi$ and any $m \geq 1$ we have
\begin{equation}\begin{aligned}
  \T{m-1}(\psi) = \mnorm{\ket\psi}^2 - 1.
\end{aligned}\end{equation}
\end{theorem}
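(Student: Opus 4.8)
The plan is to strip the SDP for $\T{m-1}$ down to a quadratic form maximized over a ball of vectors, and then to match the two sides by a pair of complementary estimates. Starting from the dual representation \eqref{eq:family_Tm} at parameter $m-1$,
\[
  \T{m-1}(\psi) = \max \lset \langle \psi, W\rangle \bar -\id \leq W \leq (m-1)\id,\ \Delta(W)\leq 0 \rset,
\]
I would apply the affine substitution $W = V-\id$. Because $\psi$ is a normalized pure state, $\langle\psi,W\rangle = \langle\psi|V|\psi\rangle - 1$, while the constraints become $0\leq V\leq m\id$ and $\Delta(V)\leq\id$. Hence, writing $G(\psi)$ for the maximum of $\langle\psi|V|\psi\rangle$ over this feasible set, the theorem reduces to the single identity $G(\psi) = \mnorm{\ket\psi}^2$. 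This is precisely the concrete form, for $\Qm = \I\cup\frac1m\DD$, of the general fact from \cite{regula_2018} that a robustness-type gauge collapses on pure states to the square of an induced Hilbert-space norm; I would establish it directly.

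For the inequality $G(\psi)\geq\mnorm{\ket\psi}^2$ I would first identify the norm dual to $\mnorm{\cdot}$. Since \eqref{eq:norm_primal} presents $\mnorm{\cdot}$ as the infimal convolution of $\lnorm{\cdot}{1}$ and $\sqrt m\,\lnorm{\cdot}{2}$, its dual is $\max\big(\lnorm{\cdot}{\infty},\tfrac{1}{\sqrt m}\lnorm{\cdot}{2}\big)$, so that
\[
  \mnorm{\ket\psi} = \max \lset |\langle w|\psi\rangle| \bar \lnorm{\ket w}{\infty}\leq 1,\ \lnorm{\ket w}{2}\leq\sqrt m \rset.
\]
Picking an optimal $\ket w$ and testing $G(\psi)$ with the rank-one operator $V=\proj{w}$ does the job: it is positive, $V\leq\lnorm{\ket w}{2}^2\,\id\leq m\id$, and $\Delta(V)_{ii}=|w_i|^2\leq\lnorm{\ket w}{\infty}^2\leq 1$, so it is feasible and yields $\langle\psi|V|\psi\rangle=|\langle w|\psi\rangle|^2=\mnorm{\ket\psi}^2$.

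For the reverse inequality $G(\psi)\leq\mnorm{\ket\psi}^2$, I would fix an arbitrary feasible $V$ and an optimal splitting $\ket\psi=\ket x+\ket y$ in \eqref{eq:norm_primal}. Viewing $V\geq 0$ as a semi-inner product, Cauchy--Schwarz gives $\langle\psi|V|\psi\rangle \leq \big(\sqrt{\langle x|V|x\rangle}+\sqrt{\langle y|V|y\rangle}\big)^2$, and I would bound the two contributions against the two terms of the norm: the operator bound $V\leq m\id$ gives $\langle y|V|y\rangle\leq m\,\lnorm{\ket y}{2}^2$, while positivity together with $\Delta(V)\leq\id$ forces $|V_{ij}|\leq\sqrt{V_{ii}V_{jj}}\leq 1$ and hence $\langle x|V|x\rangle\leq\sum_{i,j}|x_i|\,|x_j|=\lnorm{\ket x}{1}^2$. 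Together these give $\langle\psi|V|\psi\rangle\leq\big(\lnorm{\ket x}{1}+\sqrt m\,\lnorm{\ket y}{2}\big)^2=\mnorm{\ket\psi}^2$.

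I expect the upper bound to be the crux, and within it the off-diagonal estimate $\langle x|V|x\rangle\leq\lnorm{\ket x}{1}^2$. The natural temptation is to decompose $V$ itself into rank-one pieces and reduce to the lower-bound test vectors, but this fails because the pieces need not individually obey $\Delta(V)\leq\id$ or $V\leq m\id$ --- which is exactly why the pure-state reduction has content. Splitting the \emph{vector} $\ket\psi$ rather than the operator $V$ is the device that resolves this, letting the constraint $|V_{ij}|\leq 1$ and the spectral constraint $V\leq m\id$ act on separate terms; the remaining care is to confirm that the two bounds are simultaneously saturated by the rank-one optimizer of the lower bound, so that the inequalities close and $G(\psi)=\mnorm{\ket\psi}^2$.
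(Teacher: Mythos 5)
Your argument is correct, and it takes a genuinely different route from the paper for the central step. Both proofs begin with the same affine shift $W \mapsto W+\id$, turning $\T{m-1}(\psi)+1$ into the SDP $\max\{\<\psi|V|\psi\> : 0 \leq V \leq m\id,\ \Delta(V)\leq\id\}$, which is exactly the robustness-type quantity $R_{\Qm}(\psi)+1$ for $\Qm = \I\cup\frac{1}{m}\DD$. From there the paper invokes a general lemma (Thm.~10 of \cite{regula_2018}) asserting that any such robustness collapses on pure states to the square of the gauge of the underlying vector set, and then identifies that gauge as the infimal convolution $\mnorm{\cdot}$. You instead prove the collapse directly for this instance: the lower bound by exhibiting the rank-one feasible point $\proj{w}$ built from an optimizer of the dual-norm representation $\max\big(\lnorm{\cdot}{\infty}, m^{-1/2}\lnorm{\cdot}{2}\big)$, and the upper bound by splitting $\ket\psi=\ket{x}+\ket{y}$, applying Cauchy--Schwarz for the positive semidefinite form $V$, and bounding $\<x|V|x\>\leq\lnorm{\ket x}{1}^2$ via $|V_{ij}|\leq\sqrt{V_{ii}V_{jj}}\leq 1$ and $\<y|V|y\>\leq m\lnorm{\ket y}{2}^2$ via $V\leq m\id$. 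All of these steps check out; the dual-norm identity you use is the same one the paper itself derives as Eq.~\eqref{eq:norm_dual} in its proof of Theorem~2. What your approach buys is self-containedness and an explicit explanation of why the SDP admits a rank-one optimizer; what the paper's buys is brevity and generality, since the cited lemma applies verbatim to other choices of the vector set $\V$. One small remark: your closing concern about confirming ``simultaneous saturation'' is superfluous --- the two independently established one-sided inequalities already force equality, so nothing further needs to be checked.
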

To prove this Theorem we will use the following result, which appears as Thm. 10 in \cite{regula_2018}.
\begin{lemma}\label{lemma:robustness}
Let $\Q \coloneqq \conv \lset \proj{\phi} \bar \ket{\phi} \in \V \rset$ where $\V \subseteq \CC^d$ is a compact set such that $\ket{\phi} \in \V \Rightarrow e^{i\theta}\ket\phi \in \V \; \forall \theta \in \RR$ and $\operatorname{span}(\V) = \CC^d$. For any pure state $\ket\psi$, it then holds that
\begin{equation}\begin{aligned}
  \max \lset \braket{\psi | W | \psi} \bar W \mgeq 0, \;\< W, X \> \leq 1 \; \forall X \in \Q \rset = \max \lset \cbraket{\psi | w}^2 \bar \cbraket{w | v} \leq 1 \; \forall \ket{v} \in \V \rset.
\end{aligned}\end{equation}
\end{lemma}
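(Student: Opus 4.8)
The plan is to establish the two inequalities ``$\le$'' and ``$\ge$'' between the left-hand side (call it $L$) and the right-hand side (call it $R$) separately, after one preliminary reduction of the constraint set. First I would replace the constraint $\< W, X\> \le 1$ ranging over all $X \in \Q$ by the equivalent constraint $\braket{\phi|W|\phi} \le 1$ ranging only over the generating vectors $\ket\phi \in \V$: since $\V$ is compact, $\Q = \conv\lset \proj\phi \bar \ket\phi \in \V\rset$ is a compact convex body whose extreme points all lie among the $\proj\phi$, and a linear functional attains its maximum over a convex hull at an extreme point. The hypothesis $\operatorname{span}(\V) = \CC^d$ guarantees that both feasible sets are bounded, so both maxima are attained.

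The direction $R \le L$ is immediate by passing to a rank-one witness. Given any feasible $\ket w$ for the right-hand problem, I would set $W \coloneqq \proj w \mgeq 0$; then $\braket{\phi|W|\phi} = \cbraket{\phi|w}^2 \le 1$ for every $\ket\phi \in \V$, so $W$ is feasible for $L$, and its objective value is exactly $\braket{\psi|W|\psi} = \cbraket{\psi|w}^2$.

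The substantive direction is $L \le R$, where the obstacle is that an optimal $W$ need not be rank one, so one cannot simply read off a vector. The key idea is to factor $W = A^\dagger A$ (say $A \coloneqq W^{1/2}$), so that the objective becomes $\braket{\psi|W|\psi} = \norm{A\ket\psi}{2}^2$ and each constraint reads $\norm{A\ket\phi}{2}^2 \le 1$, and then to define the candidate vector $\ket w \coloneqq W\ket\psi / \norm{A\ket\psi}{2}$ (the case $\norm{A\ket\psi}{2}=0$ being trivial, since then the objective vanishes and $R\ge 0$). This choice saturates the objective, $\cbraket{\psi|w}^2 = \norm{A\ket\psi}{2}^2 = \braket{\psi|W|\psi}$, while feasibility is inherited from the operator constraints via Cauchy--Schwarz: for any $\ket v \in \V$,
\begin{equation*}
  \cbraket{w|v} = \frac{\left|\braket{\psi|A^\dagger A|v}\right|}{\norm{A\ket\psi}{2}} \le \frac{\norm{A\ket\psi}{2}\,\norm{A\ket v}{2}}{\norm{A\ket\psi}{2}} = \norm{A\ket v}{2} \le 1 .
\end{equation*}
Taking $W$ optimal then yields $R \ge L$, and combining with the previous paragraph proves the equality.

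The crux of the argument --- and the only step that is not routine --- is the explicit witness $\ket w = W\ket\psi/\norm{A\ket\psi}{2}$: it is engineered to reproduce the value $\braket{\psi|W|\psi}$ exactly, while Cauchy--Schwarz, applied to the factorisation $W = A^\dagger A$, converts the quadratic operator constraints $\braket{\phi|W|\phi}\le 1$ into the linear vector constraints $\cbraket{w|v}\le 1$. Everything else (compactness, attainment of the maxima, and the rank-one direction) is standard once the reduction to $\V$ is in place.
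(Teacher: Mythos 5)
Your proof is correct. Note that the paper does not actually prove this lemma: it imports it wholesale as Thm.~10 of Ref.~\cite{regula_2018}, so your argument is a genuinely self-contained alternative to an external citation. Both directions check out: the reduction of the constraint set from $\Q$ to $\V$ is valid (though it needs only linearity of $X \mapsto \< W, X\>$ over convex combinations, not the extreme-point language you invoke), the rank-one witness $W = \proj{w}$ gives $R \le L$ immediately, and the substantive direction rests exactly where you say it does --- on the factorisation $W = A^\dagger A$, the witness $\ket{w} = W\ket\psi / \lnorm{A\ket\psi}{2}$, and Cauchy--Schwarz converting $\braket{\phi|W|\phi} \le 1$ into $\cbraket{w|\phi} \le 1$, with the degenerate case $A\ket\psi = 0$ correctly disposed of. You also correctly locate the role of the hypothesis $\operatorname{span}(\V) = \CC^d$: it is what makes both feasible sets bounded (via $\sum_i \proj{\phi_i}$ being positive definite for a spanning subset of $\V$) and hence both maxima attained. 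One small observation: the phase-invariance assumption $e^{i\theta}\ket\phi \in \V$ plays no role in your argument, since the modulus in $\cbraket{w|v}$ absorbs phases anyway; it is needed for other parts of the machinery of Ref.~\cite{regula_2018} (where the right-hand side is interpreted as the polar gauge of $\conv(\V)$), not for this identity itself. Your route is more elementary than the gauge-function framework the paper leans on, at the cost of not exposing the connection to $\Gamma_{\V}$ that the paper exploits immediately afterwards in the proof of Theorem~1.
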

\begin{proof}[Proof of Theorem 1]
Consider the set $\Qm \coloneq \I \cup \frac{1}{m}\DD$, and define the robustness with respect to $\Qm$ as 
\begin{equation}\begin{aligned}
  R_\Qm(\rho) + 1 \coloneqq& \max \lset \< \rho, W \> \bar W \mgeq 0,\; \<W, Q\> \leq 1 \, \forall\, Q \in \Qm \rset \\
\end{aligned}\end{equation}
where the term $+1$ is added to keep the notation consistent with quantities such as the robustness of coherence ($R_\I$). Notice that with the change of variables $W' = W - \id$ we simply have
\begin{equation}\begin{aligned}
  R_\Qm(\rho) &= \max \lset \< \rho, W \> \bar 0 \mleq W \mleq m \id,\; \Delta(W) \mleq \id \rset - 1\\
  &= \max \lset \< \rho, W' \> \bar -\id \mleq W' \mleq (m-1) \id,\; \Delta(W') \mleq 0\rset\\
  &= \T{m-1} (\rho).
\end{aligned}\end{equation}

Let us then define the sets
\begin{equation}\begin{aligned}
  \B &\coloneq \lset \ket\psi \bar \proj\psi \in \I \rset\\
  \N_m &\coloneq \lset \ket\psi \bar \lnorm{\ket\psi}{2} = \frac{1}{\sqrt{m}} \rset\\
  \V_m &\coloneq \lset \ket \psi \bar \ket\psi \in \B \cup \N_m \rset\\
\end{aligned}\end{equation}
using which we get
\begin{equation}\begin{aligned}
  \Qm &= \lset \proj\psi \bar \ket\psi \in \V_m \rset.
\end{aligned}\end{equation}
What Lemma \ref{lemma:robustness} says is that the robustness measure $R_\Qm$ corresponding to a set of this form reduces on pure states to the gauge function of the set $\conv(\V_m)$; specifically, $R_\Qm(\psi) + 1 = \Gamma_{\V_m}(\ket\psi)^2$, with $\Gamma_{\V_m}(\ket\psi) = \inf \lset \lambda \geq 0 \bar \ket\psi \in \lambda \conv(\V_m) \rset$. By standard results in convex analysis (see e.g. \cite{rockafellar_1970}, 16.4.1 and 15.1.2), the gauge function of the set $\conv(\V_m)$ can be given as
\begin{equation}\begin{aligned}
  \Gamma_{\V_m} (\ket\psi) &= \inf_{\ket\psi = \ket{x} + \ket{y}} \Gamma_\B (\ket{x}) + \Gamma_{\N_m} (\ket{y})\\
  &=\inf_{\ket\psi = \ket{x} + \ket{y}} \lnorm{\ket{x}}{1} + \sqrt{m}\lnorm{\ket{y}}{2}\\
  &= \mnorm{\ket\psi}.
\end{aligned}\end{equation}
The result then follows by Lemma \ref{lemma:robustness}.
\end{proof}
\endgroup

%%%%%%%%%%%%%%%%%%%%%%%%%%%%%%%%%%%%%%%%%%%%%%%%%%%%%%%%%%%%%%%%%%%%%%%%%%%%%%%%%%%%%%%%%%%%%%%%%%%%%%%%%%%%%%%%%%%%%%%%%%%%%%%%%%%%%%%%%%%%%%%%%%%%%%

\begingroup
\renewcommand\thetheorem{2}
\begin{theorem}
For a pure state $\ket\psi$, let $\psi^\downarrow_{1:k}$ denote the vector consisting of the $k$ largest (by magnitude) coefficients of $\ket\psi$, and analogously let $\psi^\downarrow_{k+1:d}$ denote the vector of the $d-k$ smallest coefficients of $\ket\psi$, with $\psi^\downarrow_{1:0}$ being the zero vector. Then, for any pure state $\ket\psi$ and any integer $m \in \{1,\ldots,d\}$ we have
\begin{equation*}\begin{aligned}
  \mnorm{\ket\psi} = \lnorm{\psi^\downarrow_{1:m-k^\star}}{1} + \sqrt{k^\star} \lnorm{\psi^\downarrow_{m-k^\star+1:d}}{2},
\end{aligned}\end{equation*}
where $k^\star = \displaystyle\argmin_{1 \leq k \leq m} \mbox{$\frac{\lnorm{\psi^\downarrow_{m-k+1:d}}{2}^2}{k}$}.$
\end{theorem}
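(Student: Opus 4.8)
The plan is to reduce the optimization defining $\mnorm{\ket\psi}$ to a finite-dimensional convex program over nonnegative reals, solve it through its optimality conditions to expose a threshold structure, and then identify the self-consistent threshold with the index $k^\star$.

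First I would reduce to sorted nonnegative coefficients. Writing $\psi_i = |\psi_i| e^{i\theta_i}$, for each coordinate the triangle inequality gives $|\psi_i| = |x_i + y_i| \le |x_i| + |y_i|$, with equality exactly when $x_i$ and $y_i$ are nonnegative multiples of $e^{i\theta_i}$; since $\lnorm{\ket x}{1}$ and $\lnorm{\ket y}{2}$ are coordinatewise nondecreasing in the moduli $(|x_i|)_i$ and $(|y_i|)_i$ respectively, it is optimal to align all phases. Putting $a_i = |\psi_i|$ and $t_i = |y_i| \in [0,a_i]$ (so $|x_i| = a_i - t_i$), and using permutation invariance of both norms to assume $a_1 \ge \cdots \ge a_d \ge 0$, the problem becomes
\[ \mnorm{\ket\psi} = \min_{0 \le t_i \le a_i} \left( \lnorm{a}{1} - \sum_i t_i + \sqrt{m}\,\lnorm{t}{2} \right). \]
This objective is convex (a linear term plus a norm) on a box, so its KKT conditions are necessary and sufficient for the global optimum.

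Next I would extract the threshold structure. Stationarity at an interior coordinate forces $t_i = \tau$ where $\tau \coloneqq \lnorm{t}{2}/\sqrt m$; the upper-boundary case $t_i = a_i$ is consistent only when $a_i \le \tau$; and the lower boundary $t_i = 0$ is never optimal (its stationarity condition reads $-1 \ge 0$). Hence the optimizer is $t_i^\star = \min(a_i,\tau)$, so that the $j$ largest coefficients are ``split'' between $\ket x$ and $\ket y$ while the remaining $d-j$ sit entirely in $\ket y$. Imposing self-consistency $\tau = \lnorm{t^\star}{2}/\sqrt m$ gives
\[ m\tau^2 = j\tau^2 + \lnorm{\psi^\downarrow_{j+1:d}}{2}^2 \quad\Longrightarrow\quad \tau^2 = \tau_j^2 \coloneqq \frac{\lnorm{\psi^\downarrow_{j+1:d}}{2}^2}{m-j}, \]
and substituting back, with $k = m - j$, evaluates the objective to $\lnorm{\psi^\downarrow_{1:m-k}}{1} + \sqrt{k}\,\lnorm{\psi^\downarrow_{m-k+1:d}}{2}$, which is exactly the claimed formula.

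The main obstacle, and the crux of the argument, is to show that the self-consistent threshold is the one indexed by $k^\star$, i.e.\ that the admissible $j$ (the one satisfying $a_{j+1} \le \tau_j \le a_j$) is the minimizer of $\tau_j^2 = \lnorm{\psi^\downarrow_{m-k+1:d}}{2}^2/k$. I expect to handle this by an elementary unimodality argument: a direct manipulation of the two expressions for $\tau_j^2$ and $\tau_{j\pm1}^2$ yields the equivalences $\tau_j^2 \le \tau_{j-1}^2 \iff \tau_j \le a_j$ and $\tau_{j+1}^2 \ge \tau_j^2 \iff a_{j+1} \le \tau_j$. Combined with $a_j \ge a_{j+1}$, these imply that once the sequence $(\tau_j^2)_j$ stops decreasing it never decreases again, so it is unimodal with a unique valley; the valley then satisfies $a_{j^\star+1} \le \tau_{j^\star} \le a_{j^\star}$, which is precisely admissibility, so $j^\star = \argmin_j \tau_j^2 = m - k^\star$. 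Finally I would dispatch the degenerate case $\tau = 0$, which occurs exactly when $\ket\psi$ has fewer than $m$ nonzero coefficients, so that $\ket y = 0$ and the $\ell_2$ term is nonsmooth; here both the formula and the optimization collapse to $\lnorm{\ket\psi}{1}$, so the identity persists by continuity.
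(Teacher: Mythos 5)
Your argument is correct, and it reaches the formula by a genuinely different route than the paper. The paper proceeds by exhibiting a matching pair of certificates: an explicit primal decomposition $\ket\psi=\ket x+\ket y$ (essentially the same optimizer your KKT analysis derives) gives the upper bound, and after passing by duality to $\mnorm{\ket\psi}=\max\lset \cbraket{\psi|x} \bar \lnorm{\ket x}{\infty}\le 1,\ \lnorm{\ket x}{2}\le\sqrt m\rset$, an explicit dual vector gives the matching lower bound; the only nontrivial step there is dual feasibility, which uses exactly one consequence of the minimality of $k^\star$, namely $\lnorm{\psi^\downarrow_{m-k^\star+1:d}}{2}^2/k^\star\le\lnorm{\psi^\downarrow_{m-k^\star+2:d}}{2}^2/(k^\star-1)$. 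You instead solve the primal program directly: the reduction to the box-constrained convex program in the moduli, the water-filling optimizer $t_i^\star=\min(a_i,\tau)$, and your two equivalences $\tau_j^2\le\tau_{j-1}^2\iff\tau_j\le a_j$ and $\tau_{j+1}^2\ge\tau_j^2\iff a_{j+1}\le\tau_j$ (both check out) convert admissibility of the threshold into local minimality of $j\mapsto\tau_j^2$, so the global minimizer $j^\star=m-k^\star$ is automatically admissible. The paper's proof is shorter because the certificates are guessed and merely verified; yours is longer but explains \emph{why} the optimizer has this threshold form and why $k^\star$ is characterized as an $\argmin$, and the unimodality of $k\mapsto\lnorm{\psi^\downarrow_{m-k+1:d}}{2}^2/k$ is a genuine bonus absent from the paper (it shows $k^\star$ can be located by a single monotone scan). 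The one place to tighten is the nonsmooth point $t=0$: rather than appealing to ``continuity,'' note that if fewer than $m$ coefficients are nonzero, say $r<m$ of them, then $\sum_i t_i\le\sqrt r\,\lnorm{t}{2}<\sqrt m\,\lnorm{t}{2}$ for every feasible $t\ne 0$, so $t=0$ is optimal and both sides of the identity equal $\lnorm{\ket\psi}{1}$.
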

\begin{proof}
Without loss of generality, assume that $\ket\psi$ has non-negative coefficients arranged in non-increasing order. Let
\begin{equation}\begin{aligned}
  L_1 &\coloneq \lnorm{\psi^\downarrow_{1:m-k^\star}}{1}\\
  L_2 &\coloneq \lnorm{\psi^\downarrow_{m-k^\star+1:d}}{2}
\end{aligned}\end{equation}
and take the feasible solutions
\begin{equation}\begin{aligned}
  \ket{x} &= \left( \psi_1 - \frac{L_2}{\sqrt{k^\star}}, \ldots, \psi_{m-k^\star} - \frac{L_2}{\sqrt{k^\star}},\, 0, \ldots, 0\right)^T\\
  \ket{y} &= \Bigg(\underbrace{\frac{L_2}{\sqrt{k^\star}},\, \ldots,\,\frac{L_2}{\sqrt{k^\star}}}_{m-k^\star \text{ times}}, \psi_{m-k^\star+1},\ldots,\psi_d \Bigg)^T.
\end{aligned}\end{equation}
Since $\ket\psi = \ket{x}+\ket{y}$, we have
\begin{equation}\begin{aligned}
  \mnorm{\ket\psi} &\leq \lnorm{\ket{x}}{1} + \sqrt{m} \lnorm{\ket{y}}{2}\\
  &= L_1 - (m-k^\star) \frac{L_2}{\sqrt{k^\star}} + \sqrt{m} \sqrt{ (m-k^\star) \frac{L^2_2}{k^\star} + L^2_2}\\
  &= L_1 - (m-k^\star) \frac{L_2}{\sqrt{k^\star}} + \sqrt{ \frac{m(m-k^\star)}{k^\star} L^2_2 + m L^2_2}\\
  &= L_1 - (m-k^\star) \frac{L_2}{\sqrt{k^\star}} + \frac{m}{\sqrt{k^\star}} L_2\\
  &= L_1 + \sqrt{k^\star} L_2.\\
 %   &= \lnorm{\psi^\downarrow_{1:m-k^\star}}{1} + \sqrt{k^\star} \lnorm{\psi^\downarrow_{m-k^\star+1:d}}{2}.
\end{aligned}\end{equation}

Now, by strong Lagrange duality we have
\begin{equation}\begin{aligned}\label{eq:norm_dual}
  \mnorm{\ket\psi} %&= \max \lset \cbraket{\psi|x} \bar \mnorm{\ket{x}}^\circ \leq 1 \rset\\
  &= \max \lset \cbraket{\psi|x} \bar \lnorm{\ket{x}}{\infty}\leq 1,\;\lnorm{\ket{x}}{2} \leq \sqrt{m} \rset.
\end{aligned}\end{equation}
Choosing
\begin{equation}\begin{aligned}
  \ket{x} = \Bigg(\underbrace{\vphantom{\Big(}1, \ldots, 1}_{m-k^\star \text{ times}}, \frac{\sqrt{k^\star}}{L_2}\psi_{m-k^\star+1},\ldots,\frac{\sqrt{k^\star}}{L_2}\psi_d \Bigg)^T
\end{aligned}\end{equation}
we have that
\begin{equation}\begin{aligned}
  \lnorm{\ket{x}}{2} &= \sqrt{m-k^\star + \frac{k^\star}{L^2_2} L^2_2} = \sqrt{m}\\
  \lnorm{\ket{x}}{\infty} &= \max \left\{ 1,\; \frac{\sqrt{k^\star}}{L_2} \psi_{m-k^\star+1} \right\}.
\end{aligned}\end{equation}
To see that $\lnorm{\ket{x}}{\infty}$ is upper bounded by $1$, consider first the case when $k^\star = 1$. We then have
\begin{equation}\begin{aligned}
  \frac{\sqrt{k^\star}}{L_2} \psi_{m-k^\star+1} = \frac{\psi_{m}}{\psi_m + \lnorm{\psi^\downarrow_{m+1:d}}{2}^2} \leq 1.
\end{aligned}\end{equation}
Assume now that $k^\star > 1$ and consider the following chain of equivalent inequalities:
\begin{equation}\begin{aligned}\label{eq:chain_ineq}
  \frac{\lnorm{\psi^\downarrow_{m-k^\star+1:d}}{2}^2}{k^\star} &\leq \frac{\lnorm{\psi^\downarrow_{m-k^\star+2:d}}{2}^2}{k^\star-1}\\
    \psi^2_{m-k^\star+1} + \lnorm{\psi^\downarrow_{m-k^\star+2:d}}{2}^2 &\leq \frac{k^\star}{k^\star-1} \lnorm{\psi^\downarrow_{m-k^\star+2:d}}{2}\\
    \psi^2_{m-k^\star+1} &\leq \frac{1}{k^\star-1} \left(1 - \lnorm{\psi^\downarrow_{1:m-k^\star+1}}{2}^2\right)\\
    k^\star \psi^2_{m-k^\star+1} &\leq \left(1 - \lnorm{\psi^\downarrow_{1:m-k^\star}}{2}^2\right)\\
    \sqrt{k^\star} \psi_{m-k^\star+1} &\leq L_2
\end{aligned}\end{equation}
where the first line follows by hypothesis, and in the third and fifth lines we have used the fact that $\ket\psi$ is a normalized pure state. This means that
\begin{equation}\begin{aligned}
  \frac{\sqrt{k^\star}}{L_2} \psi_{m-k^\star+1} \leq \frac{L_2}{L_2} = 1
\end{aligned}\end{equation}
and so $\ket{x}$ is indeed a feasible solution to the dual optimization problem. We then have
\begin{equation}\begin{aligned}
  \mnorm{\ket\psi} &\geq \cbraket{\psi|x}\\
  &= L_1 + \sqrt{k^\star} L_2
\end{aligned}\end{equation}
which completes the proof.
\end{proof}
\endgroup

%%%%%%%%%%%%%%%%%%%%%%%%%%%%%%%%%%%%%%%%%%%%%%%%%%%%%%%%%%%%%%%%%%%%%%%%%%%%%%%%%%%%%%%%%%%%%%%%%%%%%%%%%%%%%%%%%%%%%%%%%%%%%%%%%%%%%%%%%%%%%%%%%%%%%%
%%%%%%%%%%%%%%%%%%%%%%%%%%%%%%%%%%%%%%%%%%%%%%%%%%%%%%%%%%%%%%%%%%%%%%%%%%%%%%%%%%%%%%%%%%%%%%%%%%%%%%%%%%%%%%%%%%%%%%%%%%%%%%%%%%%%%%%%%%%%%%%%%%%%%%

\section{Coherence distillation}

\begingroup
\renewcommand\thetheorem{3}
\begin{theorem}
  If $\O \in \{ \MIO, \DIO \}$, then for any state $\rho \in \DD$, the fidelity of coherence distillation and one-shot $\ve$-error distillable coherence can both be written as the following semidefinite programs:
  \begin{equation}\begin{aligned}
    F_{\O}(\rho,m)  =  \max \left\{ \<G, \rho\> \;\left|\; 0 \leq G \leq \id,\ \Delta(G) = \frac1m \id \right.\right\},
     \end{aligned}\end{equation}\vspace{-\baselineskip}\begin{equation*}\begin{aligned}\label{eq:dist_coherence}
  C_{d,\rm \O}^{(1),\ve}(\rho)  = \log \bigg\lfloor \max \bigg\{& m \;\bigg|\; \< G, \rho \> \geq 1-\ve,\; 0 \leq G \leq \id,\; \Delta(G) = \frac{1}{m} \id \bigg\} \bigg\rfloor.
   \end{aligned}\end{equation*}
\end{theorem}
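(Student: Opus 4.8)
The plan is to prove the semidefinite program for $F_\O(\rho,m)$ by two opposing inequalities, built around the adjoint identity $\langle\Lambda(\rho),\Psi_m\rangle=\langle\rho,\Lambda^\dagger(\Psi_m)\rangle$ together with the substitution $G:=\Lambda^\dagger(\Psi_m)$. The expression for $C^{(1),\ve}_{d,\O}(\rho)$ then follows directly from its definition \eqref{one shot distillable coherence}: since $F_\O(\rho,m)$ is non-increasing in $m$, the largest integer $m$ with $F_\O(\rho,m)\ge 1-\ve$ equals the floor of the largest real $m$ for which the fidelity SDP attains value $\ge 1-\ve$, which is exactly the stated program.

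For the inequality $F_\O(\rho,m)\le\mathrm{SDP}$, I take any $\Lambda\in\O$ with $m$-dimensional output and set $G=\Lambda^\dagger(\Psi_m)$, so that $\langle\Lambda(\rho),\Psi_m\rangle=\langle G,\rho\rangle$. Because $\Lambda^\dagger$ is completely positive and unital (as $\Lambda$ is trace preserving), $0\le\Psi_m\le\id$ yields $0\le G\le\id$. For the diagonal constraint I use that both MIO (by definition) and DIO (since each $\proj i$ is invariant under $\Delta$ and $\Lambda$ commutes with $\Delta$) send every $\proj i$ to a diagonal output state $D_i$; pairing a diagonal state $D$ on $\CC^m$ with the maximally coherent state gives $\langle\Psi_m,D\rangle=\tfrac1m\Tr D$, hence $\langle i|G|i\rangle=\langle\Psi_m,D_i\rangle=\tfrac1m$ and $\Delta(G)=\tfrac1m\id$. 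Thus $G$ is feasible and $F_\O\le\mathrm{SDP}$ for both $\O=\MIO$ and $\O=\DIO$.

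The substantial step is the reverse inequality, for which I construct an explicit DIO channel attaining a given feasible $G$; since $\DIO\subset\MIO$, this closes the chain $\mathrm{SDP}\le F_\DIO\le F_\MIO\le\mathrm{SDP}$. Writing $\ket\phi=\tfrac1{\sqrt m}\sum_{j=1}^m\ket j$ and $\omega=\proj\phi$ on the output, I define $\Lambda$ through its adjoint by $\Lambda^\dagger(\proj j)=\tfrac1m\id$ and $\Lambda^\dagger(\ketbra jk)=\tfrac1{m-1}\bigl(G-\tfrac1m\id\bigr)$ for $j\ne k$. Unitality is immediate from $\sum_j\Lambda^\dagger(\proj j)=\id$ (so $\Lambda$ is trace preserving); the diagonal blocks being diagonal and the off-diagonal blocks having vanishing diagonal (as $G-\tfrac1m\id$ does) give $\Lambda^\dagger\Delta=\Delta\Lambda^\dagger$, i.e. DIO; and a direct resummation yields $\Lambda^\dagger(\Psi_m)=\tfrac1m\id+\bigl(G-\tfrac1m\id\bigr)=G$, so $\langle\Lambda(\rho),\Psi_m\rangle=\langle G,\rho\rangle$ as required.

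The one genuine obstacle is verifying that this $\Lambda$ is completely positive. I would check this through its Choi operator $\sum_{jk}\ketbra jk\otimes\Lambda^\dagger(\ketbra jk)=\tfrac1m\,\id\otimes\id+\tfrac1{m-1}\,(m\omega-\id)\otimes\bigl(G-\tfrac1m\id\bigr)$, which leaves the orthogonal sectors $\ket\phi\otimes\CC^d$ and $\ket{\phi}^{\perp}\otimes\CC^d$ invariant because $m\omega-\id$ has eigenvalue $m-1$ on $\ket\phi$ and $-1$ on its complement. On the first sector it reduces to $G\ge 0$, and on each complementary sector to $\tfrac1{m-1}(\id-G)\ge 0$, both of which are positive semidefinite precisely by the feasibility constraints $0\le G\le\id$. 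Hence the Choi operator is positive, the channel is legitimate, and the construction goes through (the degenerate case $m=1$ forces $G=\id$ and is handled trivially by the incoherent channel onto $\ket 1$).
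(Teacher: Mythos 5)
Your proof is correct, and it reaches the stated SDP by a genuinely different route from the paper. The paper works entirely at the level of the Choi matrix $J$ of the distillation map: it twirls $J$ over the permutation unitaries $\id_d\ox U_\pi$ that stabilize $\Psi_m$, which forces the canonical form $J = Q\ox(\Psi_m-\tfrac1m\id_m)+\id_d\ox\tfrac1m\id_m$ with $\Delta(Q)=0$, and then reads off the positivity and trace-preservation constraints to obtain the SDP after the substitution $G=\tfrac{m-1}{m}Q+\tfrac1m\id$. You instead split the claim into two inequalities: the upper bound via the adjoint map, observing that $G=\Lambda^\dagger(\Psi_m)$ is automatically feasible for any $\Lambda\in\MIO$ (and a fortiori $\DIO$), and the lower bound via an explicit $\DIO$ channel whose adjoint sends $\Psi_m$ to a prescribed feasible $G$. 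Your channel is in fact exactly the paper's optimal operation $\Pi(\rho)=(\tr G\rho)\,\Psi_m+(1-\tr G\rho)\tfrac{\id_m-\Psi_m}{m-1}$, and your block-diagonalization of its Choi operator into the $\ket{\phi}$ and orthogonal-complement sectors is a clean, explicit verification of complete positivity that the paper states without detail. What your approach buys is that it avoids the symmetrization step and the somewhat informal appeal to the commutant of the permutation group; what the paper's approach buys is that the twirl \emph{derives} the form of the optimal channel rather than requiring one to guess it and verify optimality separately. Two small points you gloss over, which the paper also leaves implicit: the reduction to channels with exactly $m$-dimensional output in the upper bound (justified by post-composing with the SIO map that incoherently relabels levels $j>m$ to level $1$, which never decreases the overlap with $\Psi_m$), and the monotonicity in $m$ of the fidelity SDP needed to pass to the floor formula for $C^{(1),\ve}_{d,\O}$ (which follows, e.g., from concavity of the SDP value in the parameter $1/m$ together with its value $1$ at $m=1$). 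Neither affects the validity of the argument.
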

\endgroup

%%%%%%%%%%%%%%%%%%%%%%%%%%%%%%%%%%%%%%%%%%%%%%%%%%%%%%%%%%%%%%%%%%%%%%%%%%%%%%%%%%%%%%%%%%%%%%%%%%%%%%%%%%%%%%%%%%%%%%%%%%%%%%%%%%%%%%%%%%%%%%%%%%%%%%

\begin{proof} Consider first the class MIO. Denote $J$ as the Choi matrix of operation $\Pi$. Then
  \begin{align}
  \tr \Pi(\rho)\Psi_m = \tr (\tr_A J \cdot \rho^{T}\ox \id_m) \Psi_m = \tr \left[\tr_B(J \cdot \id_d\ox \Psi_m)\right] \rho^{T}.
  \end{align}
  Denote $U_\pi$ as the unitary that permutes the basis $\{\ket{i}\}$ to $\{\ket{\pi(i)}\}$. Then $\Psi_m$ is invariant under all $U_\pi$. Exploiting such symmetry, w.l.o.g., we can take $J = \sum \id_d\ox U_\pi J \id_d \ox U_\pi^\dagger$. Thus $J = R \ox \id_m +  Q \ox (\Psi_m - \frac1m \id_m)$, since $a\id_m + b (\Psi_m - \frac1m \id_m)$ is the only form of operator invariant under all $U_\pi$. Since $\tr_B J = \id_d$, we have $R = \frac1m \id_d$. To constrain $\Pi$ to be in MIO, we require that it maps any incoherent state to an incoherent state, or equivalently $\Pi(\proj{i}) \in \I$, which is only possible if $\Delta(Q) = 0$. Thus the optimal operation has the structure
  \begin{align}\label{Optimal J}
    J = Q \ox (\Psi_m - \frac1m \id_m) + \id_d \ox \frac1m \id_m,\quad \Delta(Q) = 0.
  \end{align}
  Then, $J \geq 0$ if and only if $-\frac{1}{m-1}\id_d \leq Q \leq \id_d$. By direct calculation, $\tr \Pi(\rho) \Psi_m = (1-\frac1m)\tr Q\rho^{T} + \frac1m$. Then we have
  \begin{align}
     F_{\rm MIO}(\rho,m) = \max \left\{\left(1-\frac1m\right)\tr Q\rho + \frac1m \;\left|\; -\frac{1}{m-1}\id_d \leq Q \leq \id_d, \Delta(Q) = 0 \right.\right\}.
   \end{align}
   Replacing $G = \frac{m-1}{m}Q + \frac1m \id_d$, we have
     \begin{align}
     F_{\rm MIO}(\rho,m) = \max \left\{\tr G \rho \;\left|\; 0 \leq G \leq \id_d, \Delta(G) = \frac1m \id_d\right.\right\}.
   \end{align}
   Following the definition of one-shot distillable coherence, we have the SDP for $C_{d,\rm MIO}^{(1),\ve}(\rho)$.

   The condition that $\Pi$ is in DIO corresponds to $\Pi \in \MIO$ and $\Delta(\Pi(\ket{i}\!\bra{j})) = 0 \; \forall i \neq j$. From Eq.~\eqref{Optimal J}, we can see that the optimal operation is in fact already in DIO. Thus we have
  \begin{align}
    F_{\rm MIO}(\rho,m) = F_{\rm DIO}(\rho,m), \quad C_{d,\rm MIO}^{(1),\ve}(\rho) = C_{d,\rm DIO}^{(1),\ve}(\rho).
  \end{align}

     Note that the optimal operation is given by
   \begin{align}
   \Pi(\rho) = (\tr G \rho) \Psi_m + (1-\tr G\rho) \frac{\id_m - \Psi_m}{m-1},\ 0 \leq G \leq \id_d,\ \Delta(G) = \frac1m \id_d . \end{align}
\end{proof}
%%%%%%%%%%%%%%%%%%%%%%%%%%%%%%%%%%%%%%%%%%%%%%%%%%%%%%%%%%%%%%%%%%%%%%%%%%%%%%%%%%%%%%%%%%%%%%%%%%%%%%%%%%%%%%%%%%%%%%%%%%%%%%%%%%%%%%%%%%%%%%%%%%%%%%

\begingroup
\renewcommand\thetheorem{4}
\begin{theorem}
For any pure state $\ket\psi$, any integer $m \geq 1$, and $\O \in \{ \MIO, \DIO, \SIO, \IO \}$,
\begin{equation}\begin{aligned}
  F_{\O}(\psi,m) &= \frac{1}{m} \mnorm{\ket\psi}^2.
\end{aligned}\end{equation}
\end{theorem}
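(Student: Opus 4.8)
The plan is to sandwich $F_\O(\psi,m)$ between a universal upper bound obtained from the largest class MIO and a matching lower bound obtained from the smallest class SIO, and then close the gap using the inclusions $\SIO\subseteq\IO\subseteq\MIO$ and $\SIO\subseteq\DIO\subseteq\MIO$. Since enlarging the class of operations can only increase the maximum in the definition of the fidelity, for every $\O\in\{\MIO,\DIO,\SIO,\IO\}$ one has $F_{\SIO}(\psi,m)\leq F_\O(\psi,m)\leq F_{\MIO}(\psi,m)$. It therefore suffices to prove the two inequalities $F_{\MIO}(\psi,m)\leq\frac1m\mnorm{\ket\psi}^2$ and $F_{\SIO}(\psi,m)\geq\frac1m\mnorm{\ket\psi}^2$; the stated equality for all four classes then follows by squeezing.

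For the upper bound I would invoke Proposition 2, which gives $F_{\MIO}(\rho,m)=\frac1m(\TJ{m-1}(\rho)+1)$. As the defining constraint $\Delta(W)=0$ of $\TJ{m-1}$ is strictly more restrictive than the constraint $\Delta(W)\leq 0$ of $\T{m-1}$, the feasible region shrinks and hence $\TJ{m-1}(\rho)\leq\T{m-1}(\rho)$ for every $\rho$. Thus $F_{\MIO}(\psi,m)\leq\frac1m(\T{m-1}(\psi)+1)$, and substituting Theorem 1, $\T{m-1}(\psi)=\mnorm{\ket\psi}^2-1$, yields $F_{\MIO}(\psi,m)\leq\frac1m\mnorm{\ket\psi}^2$ at once.

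The substance of the argument is the SIO lower bound, which I would establish by exhibiting an explicit deterministic protocol. Assuming without loss of generality that $\ket\psi$ has non-negative, non-increasingly ordered coefficients, write $r=m-k^\star$ and recall $L_1=\lnorm{\psi^\downarrow_{1:r}}{1}$ and $L_2=\lnorm{\psi^\downarrow_{r+1:d}}{2}$ from Theorem 2. I would define a target pure state $\ket\phi$ that keeps the large coefficients and flattens the tail: $\phi_i=\psi_i$ for $1\leq i\leq r$, $\phi_i=L_2/\sqrt{k^\star}$ for $r<i\leq m$, and $\phi_i=0$ otherwise. This is normalized, since $\sum_{i\leq r}\psi_i^2+k^\star(L_2^2/k^\star)=1$, and by construction it has the decisive overlap $\cbraket{\Psi_m|\phi}^2=\frac1m(L_1+\sqrt{k^\star}L_2)^2=\frac1m\mnorm{\ket\psi}^2$, where the last equality is exactly the explicit formula of Theorem 2. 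Because pure-state transformations under SIO (and likewise IO) are characterized by majorization, it remains only to verify $p_\psi\prec p_\phi$ for the squared-modulus vectors; a deterministic SIO map $\ket\psi\to\ket\phi$ then exists and gives $F_{\SIO}(\psi,m)\geq\cbraket{\Psi_m|\phi}^2=\frac1m\mnorm{\ket\psi}^2$.

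I expect this majorization check to be the only real obstacle, and it is precisely where the optimality of $k^\star$ is consumed. Comparing the ratio $\lnorm{\psi^\downarrow_{m-k+1:d}}{2}^2/k$ minimized in the definition of $k^\star$ at the neighboring values $k=k^\star\pm1$ forces the two inequalities $\psi_r^2\geq L_2^2/k^\star\geq\psi_{r+1}^2$ — the same chain already derived in the proof of Theorem 2 — which guarantees both that $p_\phi$ is itself non-increasingly ordered and that its flat tail dominates that of $p_\psi$. With these in hand the partial-sum comparisons split into the three ranges $t\leq r$, $r<t\leq m$, and $t>m$, each of which is immediate, yielding $p_\psi\prec p_\phi$. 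The boundary cases $k^\star\in\{1,m\}$, where either the top block or the flattened tail is empty, I would dispatch directly; note that $k^\star=m$ recovers perfect distillation $\mnorm{\ket\psi}=\sqrt m$ exactly when $\lnorm{\ket\psi}{\infty}\leq1/\sqrt m$, consistent with the earlier remark.
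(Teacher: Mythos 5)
Your proposal is correct and follows essentially the same route as the paper's own proof: the upper bound via $F_{\MIO}(\psi,m)=\frac1m(\TJ{m-1}(\psi)+1)\leq\frac1m(\T{m-1}(\psi)+1)$ combined with Theorem 1, and the lower bound via the same flattened-tail target state, the majorization criterion for pure-state SIO/IO transformations, and the inequality $\psi_{m-k^\star+1}^2\leq L_2^2/k^\star$ inherited from the proof of Theorem 2. The only (harmless) addition is your extra check $\psi_{m-k^\star}^2\geq L_2^2/k^\star$ to confirm the target's coefficients are sorted, which the paper does not need to state explicitly.
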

\endgroup
\begin{proof}
Using the fact that $\SIO \subset \DIO \subset \MIO$, the inclusion $\I \subset \J$, as well as Theorem 1, we have
\begin{equation}\begin{aligned}
  F_{\SIO}(\psi,m) \leq F_{\DIO}(\psi,m) \leq F_{\MIO}(\psi,m) \leq \frac{1}{m} (\T{m-1}(\psi)+1) = \frac{1}{m}\mnorm{\ket\psi}^2.
\end{aligned}\end{equation}

We will now show that $F_\SIO(\psi,m) \geq \frac{1}{m} \mnorm{\ket{\psi}}^2$, thus concluding the proof. To this end, begin by noting that, by the concavity of the fidelity, to characterize the optimal fidelity of distillation of a pure state it is sufficient to consider transformations into pure states \cite{vidal_2000-1}. It is known that the deterministic transformation from $\ket\psi$ to another pure state $\ket\eta$ is possible by SIO or IO if and only if the vector of squared coefficients of $\ket\psi$ is majorized by the vector of squared coefficients of $\ket\eta$ \cite{nielsen_1999,winter_2016,chitambar_2016,zhu_2018-1}, that is,
\begin{equation}\begin{aligned}
  \sum_{i=1}^{k} \psi^2_i \leq \sum_{i=1}^{k} \eta^2_i \quad \forall k \in \{1, \ldots, d\}
\end{aligned}\end{equation}
where we have assumed without loss of generality that the coefficients of $\ket\psi$ and $\ket\eta$ are non-negative and arranged in non-increasing order.

Assuming that $m \leq d$ (since otherwise we have $\mnorm{\ket\psi} = \norm{\ket\psi}{[d]}$, so we can take $m=d$ and proceed analogously), let $\begin{displaystyle}k^\star = \argmin_{1 \leq k \leq m} \frac{1}{k} \lnorm{\psi^\downarrow_{m-k+1:d}}{2}^2\end{displaystyle}$ and $L_2 = \lnorm{\psi^\downarrow_{m-k^\star+1:d}}{2}$. Consider then the $m$-dimensional state
\begin{equation}\begin{aligned}
  \ket{\eta} = \Bigg( \psi_1,\, \ldots,\, \psi_{m-k^\star},\, \underbrace{\vphantom{\bigg(}\frac{L_2}{\sqrt{k^\star}}, \ldots, \frac{L_2}{\sqrt{k^\star}}}_{k^\star \text{ times}}\Bigg)^T.
\end{aligned}\end{equation}
To show that $\ket\psi$ can be transformed into $\ket\eta$ by IO/SIO, recall from \eqref{eq:chain_ineq} that
\begin{equation}\begin{aligned}
    \psi_{m-k^\star+1}^2 &\leq \frac{L_2^2}{k^\star},
\end{aligned}\end{equation}
which means that the squared coefficients of $\ket\eta$ majorize the squared coefficients of $\ket\psi$. This then gives
\begin{equation}\begin{aligned}
  F_\SIO(\psi,m) &\geq \cbraket{\Psi_m|\eta}^2\\
  &= \frac1m \left(\lnorm{\psi^\downarrow_{1:m-k^\star}}{1} + \sqrt{k^\star} \lnorm{\psi^\downarrow_{m-k^\star+1:d}}{2}\right)^2\\
  &= \frac{1}{m} \mnorm{\ket{\psi}}^2
\end{aligned}\end{equation}
by Theorem 2, as required.
\end{proof}

\begin{remark}
The fact that $F_\SIO (\psi,m) = \frac{1}{m} \mnorm{\ket{\psi}}^2$ can be equivalently obtained by using the result of Vidal et al. \cite{vidal_2000-1}, where the fidelity of pure-state transformations under LOCC is considered.
\end{remark}

%%%%%%%%%%%%%%%%%%%%%%%%%%%%%%%%%%%%%%%%%%%%%%%%%%%%%%%%%%%%%%%%%%%%%%%%%%%%%%%%%%%%%%%%%%%%%%%%%%%%%%%%%%%%%%%%%%%%%%%%%%%%%%%%%%%%%%%%%%%%%%%%%%%%%%

\begingroup
\renewcommand\theproposition{2}
\begin{proposition}
For any state $\rho$, the one-shot distillable coherence under MIO and DIO corresponds to
\begin{equation}\begin{aligned}
  C_{d,\rm \O}^{(1),\ve}(\rho) = \min_{M \in \J} D^{\ve}_H(\rho\|M) - \delta
\end{aligned}\end{equation}
where $\delta \geq 0$ is the least number such that the solution corresponds to the logarithm of an integer, and $\J = \lset X \bar \Tr(X) = 1,\; X = \Delta(X) \rset$.
\end{proposition}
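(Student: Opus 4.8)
The plan is to derive Proposition~2 directly from the SDP characterisation of $C_{d,\O}^{(1),\ve}$ in Theorem~3 by recognising the hypothesis-testing problem $\min_{X\in\J}D_H^\ve(\rho\|X)$ as the same optimisation seen from the dual side. Writing $\Omega := \{\,G : 0\leq G\leq\id,\ \<G,\rho\>\geq 1-\ve\,\}$ for the set of feasible test operators, the definition of $D_H^\ve$ gives
\begin{equation*}
  \min_{X\in\J} D_H^\ve(\rho\|X) = -\log\,\max_{X\in\J}\,\min_{G\in\Omega}\,\<G,X\>,
\end{equation*}
so the whole content is to evaluate the bilinear saddle-point value $\max_{X\in\J}\min_{G\in\Omega}\<G,X\>$.

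First I would exchange the order of optimisation. Since $\Omega$ is convex and compact and $\J$ is convex, Sion's minimax theorem rewrites this as $\min_{G\in\Omega}\max_{X\in\J}\<G,X\>$. The inner maximisation is where the structure of $\J$ enters: because every $X\in\J$ is diagonal with $\Tr X = 1$, one has $\<G,X\> = \sum_i G_{ii}\,x_i$ subject only to $\sum_i x_i = 1$, with the $x_i$ otherwise free (and possibly negative). Hence $\max_{X\in\J}\<G,X\> = +\infty$ unless all diagonal entries of $G$ coincide, i.e.\ unless $\Delta(G) = c\,\id$ for some scalar $c$, in which case the value is exactly $c$. The saddle value therefore collapses to
\begin{equation*}
  \min\Big\{\,c \;\Big|\; 0\leq G\leq\id,\ \<G,\rho\>\geq 1-\ve,\ \Delta(G) = c\,\id\,\Big\}.
\end{equation*}

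Finally I would identify this last program with the distillation SDP. Setting $c = 1/m$, the constraint $\Delta(G)=c\,\id$ is precisely the one appearing in $F_\O(\rho,m)$ from Theorem~3, and feasibility of $\<G,\rho\>\geq 1-\ve$ says exactly that $F_\O(\rho,m)\geq 1-\ve$. Minimising $c$ thus amounts to maximising $m$ over all real $m$ with $F_\O(\rho,m)\geq 1-\ve$; calling this threshold $m_c$, the saddle value is $1/m_c$ and hence $\min_{X\in\J}D_H^\ve(\rho\|X) = \log m_c$. To pass from the continuous threshold to the integer optimisation defining $C_{d,\O}^{(1),\ve}$, I would use that $F_\O(\rho,m)$ is continuous and non-increasing in $m$ --- monotonicity following by replacing an optimal $G$ for $m_2>m_1$ with the convex combination $(1-t)G + t\,\id$, which raises the diagonal to $\tfrac1{m_1}\id$ while not decreasing $\<G,\rho\>$ --- so that $\max\{m\in\NN : F_\O(\rho,m)\geq 1-\ve\} = \lfloor m_c\rfloor$. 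Then $C_{d,\O}^{(1),\ve}(\rho) = \log\lfloor m_c\rfloor = \log\big\lfloor 2^{\min_{X\in\J}D_H^\ve(\rho\|X)}\big\rfloor = \min_{X\in\J}D_H^\ve(\rho\|X) - \delta$, with $\delta\geq 0$ the least shift making the right-hand side the logarithm of an integer.

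The main obstacle I expect is justifying the minimax exchange together with the $+\infty$ inner maximisation cleanly, since $\J$ is neither compact nor contained in the positive cone; the two one-sided inequalities may need to be checked directly rather than by quoting a minimax theorem. The payoff of handling this carefully is conceptual: the non-compactness of $\J$ in precisely the directions where the diagonal of $G$ is non-constant is what forces the hallmark constraint $\Delta(G)=\tfrac1m\id$ of the distillation SDP, so the two characterisations are compelled to coincide.
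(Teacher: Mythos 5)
Your proof is correct and follows essentially the same route as the paper's: both rest on a Sion minimax exchange between the test operator $G$ and a trace-one diagonal (sign-unconstrained) dual variable, whose unbounded directions force the constraint $\Delta(G)=\tfrac1m\id$ of the distillation SDP. You merely read the saddle-point identity in the opposite direction (from the hypothesis-testing side toward the SDP of Theorem~3) and are somewhat more explicit than the paper about the floor/integer bookkeeping via the monotonicity of $F_\O(\rho,m)$ in $m$.
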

\begin{proof}
  To begin with, note that
  \begin{equation}\begin{aligned}
    \min \lset k \bar \Delta(W) = k \id \rset = \max \lset \<M, \Delta(W) \> \bar \Tr(M) = 1 \rset
  \end{aligned}\end{equation}
  by strong Lagrange duality. From this, we can obtain
  \begin{equation}\begin{aligned}
    C_{d,\rm \O}^{(1),\ve}(\rho)  &= - \log \min_{\substack{\< \rho, G \> \geq 1 - \ve\\0 \mleq G \mleq \id}} \min \lset k \bar \Delta(G) = k \id \rset - \delta\\
 &= - \log \min_{\substack{\< \rho, G \> \geq 1 - \ve\\0 \mleq G \mleq \id}} \,\max_{\substack{\Tr(M)=1}} \< M, \Delta(G) \> - \delta\\
  &= - \log \max_{\substack{\Tr(M)=1}} \min_{\substack{\< \rho, G \> \geq 1 - \ve\\0 \mleq G \mleq \id}} \< \Delta(M), G \> - \delta\\
 &= \min_{\substack{\Tr(M)=1\\M=\Delta(M)}} - \log \min_{\substack{\< \rho, G \> \geq 1 - \ve\\0 \mleq G \mleq \id}} \< M, G \> - \delta\\
 &= \min_{M \in \J} D^{\ve}_H(\rho\|M)- \delta
  \end{aligned}\end{equation}
  where the third equality follows by Sion's minimax theorem \cite{sion_1958} and the self-duality of the diagonal map $\Delta$. Here, without loss of generality we take $\log x = -\infty$ for any $x \leq 0$.
\end{proof}
\begin{remark}One can show in the same way that pure states satisfy
\begin{equation}\begin{aligned}
  C_{d,\rm \O}^{(1),\ve}(\psi) &= \min_{\sigma \in \I} D_H^\ve(\psi||\sigma) - \delta\\
  &= -\log \min_{\substack{\< \psi, G \> \geq 1 - \ve\\0 \mleq G \mleq \id}}\, \norm{\Delta(G)}{\infty} - \delta
\end{aligned}\end{equation}
  since we know from Thm. 4 that $F_\O(\psi,m) = \frac{1}{m} \left(\T{m}(\psi) + 1\right)$.
  \end{remark}
\endgroup

%%%%%%%%%%%%%%%%%%%%%%%%%%%%%%%%%%%%%%%%%%%%%%%%%%%%%%%%%%%%%%%%%%%%%%%%%%%%%%%%%%%%%%%%%%%%%%%%%%%%%%%%%%%%%%%%%%%%%%%%%%%%%%%%%%%%%%%%%%%%%%%%%%%%%%

\end{document}